\documentclass[11pt]{article}
\usepackage[margin=1in]{geometry}
\usepackage{amsmath,amssymb,amsthm,booktabs}
\usepackage[hidelinks]{hyperref}

\usepackage{orcidlink}

\newtheorem{theorem}{Theorem}
\newtheorem{lemma}[theorem]{Lemma}
\newtheorem{proposition}[theorem]{Proposition}
\newtheorem{claim}{Claim}[section]
\newtheorem{observation}{Observation}[section]
\newtheorem{corollary}{Corollary}[section]

\theoremstyle{remark}

\newcommand{\APS}{\operatorname{APS}}
\newcommand{\E}{\mathbb E}

\usepackage{color-edits}

\usepackage[numbers]{natbib}

\addauthor{TE}{magenta}

\addauthor{TG}{red}

\newcommand{\MMS}{\operatorname{MMS}}

\title{Fair Share Allocations for Almost All Agents}
\author{%
  \normalsize\mdseries
  Tomer Ezra\,\orcidlink{0000-0003-0626-4851}\thanks{%
    Tel Aviv University, Tel Aviv, Israel.
    Email: \texttt{tomerezra@tauex.tau.ac.il}.%
  }
  \and
  \normalsize\mdseries
  Tamar Garbuz\,\orcidlink{0009-0009-3922-4160}
}\date{}

\begin{document}
\maketitle

\begin{abstract}
We study fair allocation of indivisible goods among agents with
additive valuations. Our main result guarantees every agent her exact
AnyPrice Share (APS) whenever the total entitlement is at most
$1-\varepsilon$ and each individual entitlement is sufficiently small.
For equal entitlements, this yields a
$1$-out-of-$(n+O(\sqrt{n\log n}))$ maximin share (MMS) allocation.
We also establish two best-of-both-worlds guarantees that preserve
proportionality in expectation: each agent receives her full MMS with
probability at least $1-O((\log n/n)^{1/3})$, or every agent receives her
$1$-out-of-$(n+O(n^{2/3}(\log n)^{1/3}))$ MMS in every outcome.
Both guarantees extend to sufficiently small unequal entitlements,
with APS replacing MMS. Quantitatively, our deterministic result
allows an entitlement cap of order
$\varepsilon^2/\log(1/\varepsilon)$, while an impossibility construction
shows that any universally sufficient cap must be $O(\varepsilon)$.
\end{abstract}

\section{Introduction}

Fair division studies how to allocate resources among agents with
different preferences and claims. A natural goal for agents with equal
claims is proportionality: each of $n$ agents should receive at least a
$1/n$ fraction of her value for all the goods. When the goods are
indivisible, however, this goal may be impossible to achieve,
motivating benchmarks that account for indivisibility.

The maximin share (MMS), introduced by \citet{budish2011combinatorial},
provides one such benchmark. An agent partitions the goods into $n$
bundles and considers the least valuable bundle in her best partition.
Her MMS is the value she could secure if she chose the partition but
received its least preferred bundle. Although each agent’s MMS is defined separately, an MMS allocation must divide the goods so that every agent receives a bundle worth at least her own MMS. Such an allocation need not exist even for additive
valuations~\citep{kurokawa2018fair}.

The nonexistence of MMS allocations motivates relaxations
of the benchmark. A multiplicative relaxation guarantees each agent
a fraction of her MMS. An ordinal relaxation instead increases the
number of bundles used to define the share. For $d>n$, an agent's
$1$-out-of-$d$ MMS is the largest value she can guarantee by partitioning
the goods into $d$ bundles and receiving a least-valued bundle, while
the allocation still has only $n$ recipients. This relaxation is
ordinal because whether a bundle meets the benchmark depends only
on the agent's ranking of bundles, and is unchanged by transformations
of values that preserve this ranking~\citep{hosseini2022ordinal}.
The central question is how large $d$ must be relative to $n$
to guarantee that such an allocation exists.

Many works ask whether every additive instance admits a
$1$-out-of-$(n+1)$ MMS allocation and, more broadly, for which values of
$d$ a $1$-out-of-$d$ MMS allocation is always guaranteed
\citep{kurokawa2018fair,aigner2022envy,hosseini2021guaranteeing,
hosseini2022ordinal,akrami2023improving,babaioff2025fair,
akrami2026simultaneous}.
For four agents, \citet{schwerdtfeger20261} recently
established the $1$-out-of-$5$ guarantee, but the question remains open
for a general number of agents.

For arbitrary $n$, a sequence of works has reduced the sufficient
number of bundles. \citet{aigner2022envy} obtain $d=2n-2$ through
envy-free matchings. \citet{hosseini2021guaranteeing} subsequently obtain
$d=\lceil3n/2\rceil$,\footnote{An alternative (and equivalent) approach
asks whether every prescribed set of $n$ agents in a $d$-agent instance
can receive their original MMS~\citep{hosseini2021guaranteeing}.}
and \citet{hosseini2022ordinal} sharpen it to $d=\lfloor3n/2\rfloor$
as part of a broader theory of ordinal MMS guarantees.
\citet{akrami2023improving} subsequently establish
$d=4\lceil n/3\rceil$, improving the asymptotic bound.
Despite this progress, these general bounds require an additional
number of bundles linear in $n$. This leaves a basic asymptotic
question: can $1$-out-of-$(n+o(n))$ MMS always be guaranteed?
Such a bound would make the relative increase in the number of bundles
vanish as the population grows.

\paragraph{Unequal entitlements.}
The search for stronger ordinal guarantees also has a natural counterpart when agents have
unequal entitlements. The AnyPrice Share (APS), introduced by
\citet{babaioff2024fair}, associates an agent's entitlement with a
purchasing budget. For any nonnegative item prices summing to one,
the agent selects her most valuable affordable bundle; her APS is the
minimum value she can obtain over all such price vectors.
For entitlement $1/d$, APS is at least the $1$-out-of-$d$ MMS.
Consequently, increasing the ordinal denominator can be viewed as
introducing slack into the total entitlement: assigning entitlement
$1/d$ to each of $n$ agents gives total entitlement $n/d<1$.
This suggests a more general question. When every individual
entitlement is small, does a small amount of slack in their sum
suffice to guarantee every agent her exact APS?

\paragraph{Best-of-both-worlds fairness.}
The preceding questions concern the guarantees attainable in a single
allocation. Allowing randomization brings proportionality back into
consideration: Even when proportionality is impossible in an integral
allocation, it can always be achieved in expectation. For equal
entitlements, for example, giving all goods to a uniformly chosen agent
gives each agent her proportional share in expectation, but leaves all
other agents empty-handed in every outcome. This contrast motivates
\emph{best-of-both-worlds} fairness, which combines an \emph{ex-ante}
guarantee on expected value with an \emph{ex-post} guarantee on realized
allocations. MMS provides a natural benchmark for the latter, leading
to the question of how much of it can be guaranteed while preserving
proportionality in expectation.

For equal entitlements, \citet{babaioff2022best}
show that ex-ante proportionality can be combined with an ex-post
$1/2$-MMS guarantee. They ask whether this MMS guarantee can be
substantially improved while preserving ex-ante proportionality
\citep{babaioff2022best}.
More recently, \citet{babaioff2026near} obtain stronger guarantees
for three agents: ex-ante proportionality together with ex-post
$9/10$-MMS. Their construction also gives each agent her full MMS
with probability at least $2/3$
\citep{babaioff2026near}.
These results motivate asking how strong share-based guarantees can
become when the number of agents is large.

There are two natural ways to pursue this question.\footnote{Another
option is to guarantee every agent a constant fraction of her MMS
in every outcome, as in \citep{babaioff2022best}.}
One is to retain the original MMS and allow each agent a small
probability of falling short. The other is to require a guarantee
in every outcome, while slightly increasing the number of bundles
defining the share.
Can proportionality in expectation coexist with each agent receiving
her full MMS with probability tending to one? Can it instead coexist
with an ex-post $1$-out-of-$(n+o(n))$ MMS guarantee?
The probability in the first question is an individual guarantee;
the second requirement holds for every agent in every realized
allocation. More generally, can analogous guarantees be obtained
for agents with unequal entitlements, with APS replacing MMS?
Together with the deterministic questions above, these are the
questions we study.

\subsection{Model}\label{sec:model}

Let $N$ be a set of $n\geq 2$ agents and let $M$ be a finite set
of $m$ indivisible goods.  Every agent $i\in N$
has an entitlement $b_i\in(0,1]$ and a nonnegative additive valuation
$v_i$. Thus $v_i(g)\ge0$ is her value for a good $g\in M$, and
\[
 v_i(A)=\sum_{g\in A}v_i(g)\qquad(A\subseteq M).
\]
Write $W=\sum_{i\in N}b_i$ for the total entitlement. We consider
$W\le1$; we never rescale an agent's entitlement when defining her APS.
All logarithms are natural. 

\paragraph{Allocations.}
An \emph{integral allocation} is a family $A=(A_i)_{i\in N}$ of
pairwise disjoint bundles $A_i\subseteq M$. It is \emph{complete} if
$\bigcup_{i\in N}A_i=M$ and \emph{partial} otherwise; unless stated
otherwise, either is allowed. Nonnegativity allows us to assign leftover
goods without decreasing any agent's value.

A \emph{fractional allocation} is an array
$y=(y_{ig})_{i\in N,g\in M}$ of nonnegative numbers satisfying
\[
 \sum_{i\in N}y_{ig}\le1\qquad(g\in M).
\]
The number $y_{ig}$ is the fraction of good $g$ assigned to agent $i$.
Her fractional value is $\sum_{g\in M}y_{ig}v_i(g)$, and her
\emph{fractional support} is $\{g\in M:y_{ig}>0\}$.

A \emph{randomized allocation} is a
probability distribution with finite support over integral allocations.
For a distribution over allocations and a sampled allocation $A$, define the assignment
probability of good $g$ to agent $i$ by
\[
 \mu_{ig}=\Pr(g\in A_i).
\]
These probabilities form a fractional allocation, and additivity gives
\[
 \E[v_i(A_i)]=\sum_{g\in M}\mu_{ig}v_i(g).
\]
A guarantee is \emph{ex-ante} if it concerns expected value and \emph{ex-post} if it
holds in every realized allocation. When $W=1$, the
randomized allocation is \emph{ex-ante proportional} if
$\E[v_i(A_i)]\ge b_i v_i(M)$ for every agent $i\in N$.

\paragraph{The Maximin Share.}
For a positive integer $d$, let $\Pi_d(M)$ be the set of partitions
of $M$ into $d$ pairwise disjoint, possibly empty bundles.
Agent $i$'s \emph{$1$-out-of-$d$ maximin share} is
\[
 \MMS_i(d)
 =\max_{(B_1,\ldots,B_d)\in\Pi_d(M)}
   \min_{1\le \ell\le d}v_i(B_\ell).
\]
This is the largest value she can guarantee by partitioning the goods
into $d$ bundles and receiving a least-valued bundle.
An \emph{MMS allocation} is an integral allocation satisfying
$v_i(A_i)\ge\MMS_i(|N|)$ for every agent $i\in N$.
The usual maximin share corresponds to $d=n$. We retain the
parameter $d$ explicitly because our results also concern $d>n$.
An $\MMS(d)$ allocation satisfies $v_i(A_i)\ge\MMS_i(d)$
for every agent $i\in N$.

\paragraph{The AnyPrice Share.}
For $M\ne\varnothing$, a \emph{price vector} is a family
$\pi=(\pi_g)_{g\in M}$ of nonnegative prices with
$\sum_{g\in M}\pi_g=1$. Define the price of a bundle $A\subseteq M$
by $\pi(A)=\sum_{g\in A}\pi_g$. For an agent $i\in N$ and a budget
$b\in(0,1]$, her \emph{AnyPrice Share} is
\[
 \APS_i(b)=
 \min_{\substack{\pi_g\ge0\ (g\in M)\\\sum_{g\in M}\pi_g=1}}
 \ \max_{\substack{A\subseteq M\\\pi(A)\le b}}v_i(A).
\]
For $M=\varnothing$, set $\APS_i(b)=0$. 

For every positive integer $d$,
\begin{equation}
\label{eq:apsmms}    
 \APS_i(1/d)\ge\MMS_i(d).
\end{equation}

Indeed, under any normalized price vector, at least one bundle
in a partition attaining $\MMS_i(d)$ has price at
most $1/d$, and every bundle in that partition has value at least
$\MMS_i(d)$.
In particular, when $b_i=1/|N|$, agent $i$'s APS is at least
her usual MMS.

Define each agent's target
and the set of agents with positive targets by
\[
 t_i=\APS_i(b_i)\quad(i\in N),\qquad
 N_+=\{i\in N:t_i>0\}.
\]
An \emph{APS allocation} is an integral allocation satisfying
$v_i(A_i)\ge t_i$ for every agent $i\in N$.

\subsection{Our Contribution}

Our main result in this paper is the existence of an integral APS allocation under entitlement slack when all entitlements are small enough.
\begin{theorem}[APS with entitlement slack]\label{thm:aps}
For $0<\varepsilon\leq 1/2$, if $W\le1-\varepsilon$ and $\max_i b_i\le \frac{\varepsilon^2}{2048\log(8/\varepsilon)}$ then there exists an APS allocation.
\end{theorem}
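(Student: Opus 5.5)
We prove the statement by randomized rounding of a fractional allocation, followed by a deterministic repair for the rare failures. The starting point is the LP duality characterization of the APS. Agent $i$'s APS is at least $t_i$ exactly when there is a distribution over bundles $S\subseteq M$ with $v_i(S)\ge t_i$ in which every good lies in the chosen bundle with probability at most $b_i$. Indeed, the min over prices of the max over affordable bundles dualizes to a fractional packing of ``good'' bundles whose item marginals are at most $b_i$ and whose total weight is $1$. Fix such a distribution $\lambda_i$ for each $i\in N_+$.

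First, we trim the bundles. We first make each supporting bundle minimal with respect to attaining $t_i$, so removing any good drops its value below $t_i$. We then separate \emph{large} goods, with $v_i(g)$ above a threshold of order $\delta t_i$, from \emph{small} ones. Because each good has marginal at most $b_i$, which is tiny, large goods play the role of ``indivisible atoms'' that must be handled combinatorially.

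Second, we perform a two-stage random allocation. Each agent $i$ independently samples $S_i\sim\lambda_i$ and is then scaled down, keeping each good of $S_i$ independently with probability $1-\varepsilon/2$. Because $W\le1-\varepsilon$, the expected number of agents claiming any good is at most $1-\varepsilon$. Each good is given to one claimant chosen uniformly at random. Conditional on $i$ claiming $g$, the number of competing claimants is a sum of independent Bernoullis with total mean at most $1-\varepsilon$. The probability that $i$ actually receives $g$ is therefore at least $1-(1-\varepsilon)/2$, hmm, which is too weak. We must therefore instead use slack more cleverly. We will reserve ``overdemand'' by letting each agent sample several bundles, about $k\approx\log(8/\varepsilon)/\varepsilon$ scaled copies with weights $b_i/k$ each, or equivalently fractional duplicates. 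The agent keeps any good she wins. A concentration bound (Bernstein/Chernoff on small goods, with variance controlled by $\max_i b_i$) then shows that her value from won small goods is at least $(1-\varepsilon/4)$ times her target except with probability $\exp(-\Omega(\varepsilon^2/(\max_i b_i\log(1/\varepsilon))))$. This is where the exact cap $\varepsilon^2/(2048\log(8/\varepsilon))$ will come from: the failure probability per agent must be below a constant such as $\varepsilon/8$.

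Third, we convert ``most agents succeed with high probability'' into ``all agents succeed.'' We view this as a flow/Hall-type argument rather than a union bound, since $n$ may be huge compared to $1/\varepsilon$. We plan a Lovász Local Lemma-free approach: iterate. Agents who fail, a random $\varepsilon/8$ fraction in expectation, release their won goods. Having been satisfied, the successful agents' unused entitlement frees budget. The failed agents form a residual instance whose total entitlement is at most about $\varepsilon/8\cdot W$, and each of their APS values with respect to the remaining goods is still at least their target, since removal of goods only hurts if the removed goods carry price mass. This monotonicity point needs care. We argue it via the fractional certificate: goods taken by satisfied agents carry $\lambda$-mass at most $W$, leaving each failed agent a certificate with budget reduced by a controlled factor. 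Recursing with geometrically shrinking total entitlement and a fixed per-agent cap converges after finitely many rounds; probabilistic existence in each round yields a deterministic allocation.

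The main obstacle is the third step, specifically ensuring that the residual instance still satisfies the hypotheses with the same $\varepsilon$. Failed agents need their exact APS, not an approximation, so we must preserve enough residual certificate mass. To do this, we will run step two with target $t_i$ exactly, using the $\varepsilon/2$ slack to absorb both rounding loss and the mass consumed by others. We will first try to handle large goods by a bipartite matching (each agent needs few large goods, with marginals summing to at most $1-\varepsilon$ per good, so Hall's condition holds fractionally and integrality follows).
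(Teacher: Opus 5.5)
There is a genuine gap: nothing in your plan gives \emph{every} agent her \emph{exact} target $t_i$. Contention-based sampling with a Bernstein/Chernoff bound gives at best $(1-\varepsilon/4)t_i$ with high probability. You never say how the entitlement slack becomes value slack, and the obvious routes fail. Sampling a certificate at an inflated budget buys no extra value, since $\APS_i(b)$ is a step function of $b$. Taking unions of several sampled bundles only multiplies the demand on every good.

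For agents whose certificate bundles consist of a few large goods there is no concentration at all. Take $m$ identical goods, $(1-\varepsilon)m$ agents and entitlements $1/m$: every certificate is uniform over singletons. The number of rivals for a sampled good is roughly Poisson with mean $1-\varepsilon$, so about an $e^{-1}$ fraction of agents lose when $\varepsilon$ is small, not an $\varepsilon/8$ fraction. Your fallback for large goods (fractional Hall plus integrality of bipartite matching) does not apply, because an agent needs a \emph{bundle} from a family rather than one good. The relevant object is a configuration LP, and fractional feasibility with every good loaded below one does not imply an integral solution. The paper's thirteen-goods hardness instance, with bundles of two to four large goods, is exactly such a case.

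The repair recursion does not close either. At a fixed budget, APS can collapse when goods are removed, because prices are renormalized over what remains. In the example above, every first round leaves fewer than $m$ goods, so the failed agents' APS on them at budget $1/m$ is $0$. Inflating their entitlements to compensate makes each remaining agent's share (with equal entitlements) of order one over the number of remaining agents. The cap therefore fails once fewer than order $\varepsilon^{-2}\log(1/\varepsilon)$ agents remain, which is precisely where the recursion must end. Moreover, your bound on the certificate mass of taken goods holds only in expectation and need not survive conditioning on failure.

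The paper's proof has no failure events to repair. It works in the ordered instance, with values normalized by $t_i$ and capped at one, and groups goods into geometrically growing buckets. Every request in bucket $C_j$ is served from the ordinary goods $S_{j-1}$ of the preceding bucket, which dominate pointwise. The entitlement slack is what makes these supplies large enough, and the one-good loss of faithful rounding on tails is paid by a dedicated buffer good. Configurations are chosen at a vertex of the resulting LP. Agents with $b_i\ge z$ have cutoff at most $h(z)$, so their later capacity rows collapse into a single tail equation, and at most $h(z)\le2+2q\log(1/z)$ of them are fractional. Integrating over $z$ bounds the fractional agents' total entitlement by $1/(2q)$, and a reserve consisting of every $q$-th good serves them. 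That integral, not a Chernoff tail, is where the cap of order $\varepsilon^2/\log(1/\varepsilon)$ comes from.
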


An immediate corollary\footnote{Another interpretation of Corollary~\ref{cor:mms} is that for agents with additive valuations there always exists a $(1-\epsilon,1)$-MMS for large enough number of agents. The first parameter is the fraction of agents for which the guarantee holds, and the second parameter is what fraction of the MMS is guaranteed.} of our result combined with Inequality~\eqref{eq:apsmms} is

\begin{corollary}\label{cor:mms}
For every instance with $n$ agents, there exists an $\MMS(d)$ allocation for $$d=n+O(\sqrt{n\log(n)}).$$
\end{corollary}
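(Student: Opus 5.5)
The plan is to apply Theorem~\ref{thm:aps} with equal entitlements $b_i=1/d$ for every agent, with $d$ chosen so that the two hypotheses of the theorem hold. Inequality~\eqref{eq:apsmms} then converts the resulting APS allocation into an $\MMS(d)$ allocation. Write $d=n+k$, so that $W=n/d=1-k/d$. The slack is therefore $\varepsilon=k/d$, and the hypothesis $W\le 1-\varepsilon$ holds with equality. We need $\varepsilon\le 1/2$, which means $k\le n$. We also need the cap condition
\[
\frac1d\le \frac{\varepsilon^2}{2048\log(8/\varepsilon)}=\frac{k^2}{2048\,d^2\log(8d/k)},
\qquad\text{equivalently}\qquad k^2\ge 2048\, d\log(8d/k).
\]

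Next we choose $k=\lceil C\sqrt{n\log n}\rceil$ for a suitable absolute constant $C$, and check the cap condition for all sufficiently large $n$. Since $k\le n$, we have $d\le 2n$. Since $k\ge 1$, we have $\log(8d/k)\le \log(16n)\le 5\log n$ for $n\ge 2$. The right-hand side of the cap condition is then at most $2048\cdot 2n\cdot 5\log n$. The left-hand side is at least $C^2 n\log n$. So $C^2\ge 20480$ suffices.

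It remains to check $k\le n$, which holds once $C\sqrt{n\log n}+1\le n$, i.e.\ for all $n\ge n_0$ with $n_0$ an absolute constant. For the finitely many $n<n_0$ we use a known linear bound, such as $d=2n-2$ from \citet{aigner2022envy} or $\lfloor 3n/2\rfloor$ from \citet{hosseini2022ordinal}. For bounded $n$ these are $n+O(1)$ and can be absorbed into the $O(\sqrt{n\log n})$ term by enlarging the constant. Alternatively, $d$ can simply be taken large; the statement is only existential in $d$.

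Finally, because each agent has $b_i=1/d$, the APS allocation gives every agent $v_i(A_i)\ge \APS_i(1/d)\ge \MMS_i(d)$ by \eqref{eq:apsmms}, which is what the corollary claims. Increasing $d$ only lowers $\MMS_i(d)$, so monotonicity lets any larger $d$ be used if convenient. There is no real obstacle here; the only care needed is the small-$n$ edge cases and confirming that $\log(8/\varepsilon)=O(\log n)$, which holds because $\varepsilon\ge 1/d\ge 1/(2n)$.
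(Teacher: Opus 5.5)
Your proposal is correct and is exactly the paper's route: the corollary is stated as an immediate consequence of Theorem~\ref{thm:aps} with $b_i=1/d$ and $\varepsilon=1-n/d$, combined with \eqref{eq:apsmms}. Your reduction of the cap condition to $k^2\ge 2048\,d\log(8d/k)$, the choice $k=\Theta(\sqrt{n\log n})$, and the handling of bounded $n$ supply the parameter check that the paper leaves implicit. For bounded $n$, Theorem~\ref{thm:aps} with $\varepsilon=1/2$ and an absolute constant $d$ also works.
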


\paragraph{Best of both worlds.}
We also obtain ex-ante proportional randomized allocations with two guarantees: each agent receives her APS with high probability, or every agent receives her APS at a slightly reduced entitlement in every outcome.

\begin{theorem}[Proportionality and APS with high probability]
\label{thm:bobw}
For $0<\varepsilon\leq 1/2$, if $W=1$ and $\max_i b_i\le \frac{\varepsilon^3}{262144\log(16/\varepsilon)}$ then there exists a randomized allocation $A$ guaranteeing for every agent  $i\in N$ that

\[
 \E[v_i(A_i)]\ge b_i v_i(M),\qquad
 \Pr\bigl[v_i(A_i)\ge\APS_i(b_i)\bigr]\ge1-\varepsilon.
\]
\end{theorem}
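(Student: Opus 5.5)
We derive Theorem~\ref{thm:bobw} from Theorem~\ref{thm:aps} by mixing a fair-share component with a proportional lottery. The idea is to write each agent's proportional share $b_iv_i(M)$ as a convex combination of an ``APS part'' and a ``lottery part''. Set $\delta=\varepsilon/2$ and let $\lambda=1-\varepsilon$. We build a distribution over integral allocations. With probability $\lambda$ we output an APS allocation of an auxiliary instance. With probability $1-\lambda=\varepsilon$ we output a ``correction'' allocation that makes up the expected-value deficit. In the first branch every agent meets her APS, so $\Pr[v_i(A_i)\ge\APS_i(b_i)]\ge 1-\varepsilon$ follows at once. Everything then reduces to controlling expectations.

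\textbf{Step 1: an APS allocation that uses few goods in expectation.} Theorem~\ref{thm:aps} needs total entitlement at most $1-\delta$, while here $W=1$. The fix is to randomize over which agents are served and to give the APS allocation a slack budget. Concretely, we use LP duality: APS is a min-max over prices, and its dual is a fractional ``bundle lottery'' in which agent $i$ receives a distribution over bundles of value $\ge t_i$ and price at most $b_i$. I would first prove a fractional lemma. There is a fractional allocation $y$ in which each agent $i$ receives a distribution over bundles of value at least $t_i$, and the total mass per good is at most $1-\delta$. This should follow from $W\le 1-\delta$ after scaling. We then round $y$ to integral allocations via the (presumably concentration/alteration-based) argument behind Theorem~\ref{thm:aps}, applied to entitlements $b_i(1-\delta)$. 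These satisfy the cap, since $\varepsilon^3/(262144\log(16/\varepsilon))\le \delta^2/(2048\log(8/\delta))$. The rounding must produce a distribution $\mathcal D$ over APS allocations with marginals $\Pr(g\in A_i)$ close to $y_{ig}$. The point is to show that $\mathcal D$ leaves, for each good, a removal probability of at least roughly $\delta/2$. Here I would reuse the structure of the proof of Theorem~\ref{thm:aps}, where goods are presumably over-provisioned by a factor $1/(1-\delta)$.

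\textbf{Step 2: the correction lottery.} Let $\mu_{ig}$ be the marginals of $\mathcal D$. We need $\lambda\sum_g\mu_{ig}v_i(g)+\varepsilon\,\E_{\text{corr}}[v_i]\ge b_iv_i(M)$. In the correction branch we choose a fractional allocation $z$ with $z_{ig}\ge 0$ and $\sum_i z_{ig}\le 1$, and implement it by Birkhoff–von Neumann (goods are assigned independently, each to at most one agent). A natural target is $\lambda\mu_{ig}+\varepsilon z_{ig}\ge b_i$ for all $i,g$. This pointwise condition implies ex-ante proportionality with no dependence on valuations, and it is feasible iff $\sum_i\max(0,b_i-\lambda\mu_{ig})\le\varepsilon$ for every $g$. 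This gives the key requirement on Step 1: the APS lottery should not ``starve'' goods-agent pairs by more than $\varepsilon$ in total. If the pointwise condition is too strong, I would instead require only the agent-wise value inequality. We would then get $z$ by a fractional proportional-allocation argument (e.g.\ a competitive equilibrium with budgets $b_i$ on the residual), which always exists.

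\textbf{Main obstacle.} The hard step is Step 1. We must show that the APS rounding can be done in a way that keeps each agent's expected received value close to proportional, or at least has per-good deficits summing to at most $\varepsilon$. Theorem~\ref{thm:aps} is purely existential and says nothing about marginals. I would try to open up its proof: pick, independently for each agent, a random bundle from her dual APS distribution (price $\le b_i$), resolve conflicts, and then use the small entitlement cap for concentration. The extra factor of $\varepsilon$ in the cap ($\varepsilon^3$ vs.\ $\varepsilon^2$) is exactly what I expect this sharper, marginal-preserving concentration to require.
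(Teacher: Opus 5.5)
There is a genuine gap, and it is structural. Your first branch, which with probability $1-\varepsilon$ outputs an integral allocation in which \emph{every} agent receives $\APS_i(b_i)$, does not exist in general when $W=1$. Take the instance of Proposition~\ref{prop:hardness-padding} with $n$ so large that $\beta_n$ is below the cap. Add a dummy agent with zero valuation and entitlement $1/(10n+1)$. Her APS is zero and the others' APS is unchanged, so all hypotheses of the theorem hold, yet no allocation gives every agent her APS. None of your fixes repairs this:
\begin{itemize}
\item Running the machinery of Theorem~\ref{thm:aps} at entitlements $(1-\delta)b_i$ only guarantees $\APS_i((1-\delta)b_i)$, which can be strictly smaller than $\APS_i(b_i)$. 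That is the guarantee of Theorem~\ref{thm:bobw2}, not of this theorem.
\item Your fractional lemma (every agent fully served at target $t_i$ with per-good mass at most $1-\delta$) is false when $W=1$. With $n$ goods of value one and $b_i=1/n$, every bundle of value at least $t_i=1$ contains a good. So the total mass is at least $n=m$ and cannot be at most $1-\delta$ per good.
\item If you randomize over which agents are served, the unserved agents fail inside the main branch. The whole failure budget $\varepsilon$ is then no longer available for a separate correction branch.
\end{itemize}

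The missing idea is to create slack by dropping agents rather than shrinking entitlements, with a proportionality guarantee that rescales to the remaining total. The paper first proves Proposition~\ref{prop:bobw-slack}: if $W\le1-\delta$ and $b_i\le\kappa_\delta$, there is a randomized complete allocation with $v_i(A_i)\ge\APS_i(b_i)$ in every outcome and $\E[v_i(A_i)]\ge(b_i/W)v_i(M)$. It then splits $N$ into $L=1+1/\varepsilon$ groups of nearly equal entitlement. It drops group $\ell$ with probability $p_\ell=W_\ell/(L-1)\le\varepsilon$, where $W_\ell\le1-\varepsilon/2$ is the entitlement outside $G_\ell$, and applies the proposition to the remaining agents. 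An agent fails only when her own group is dropped. Summing $p_\ell\, b_i/W_\ell$ over the other $L-1$ groups gives exactly $b_i$, so no correction branch is needed.

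Your Step~1, the marginal control you rightly call the main obstacle, is exactly the content of that proposition, and your suggested route would not deliver it. The proof of Theorem~\ref{thm:aps} selects an LP vertex, which discards the certificate marginals. Independent sampling with concentration controls the capacity rows only with high probability, not in every outcome. The paper instead does the following:
\begin{itemize}
\item It conditions each certificate on $K_i(P)\le1/(2q)$ via Markov's inequality; this is where the extra factor $\varepsilon$ in the cap comes from.
\item It selects configurations with the dependent joint-sampling Lemma~\ref{lem:doerr-distribution-preserving}. This preserves each $\widetilde D_i$ exactly while keeping every capacity row within $2\xi$ in every outcome.
\item It reads off $\mu_{ig}\le b_i/W$ from the explicit marginals in Lemma~\ref{lem:realize}, then fills unused goods to reach exactly $b_i/W$.
\end{itemize}
Note also that your pointwise condition $\lambda\mu_{ig}+\varepsilon z_{ig}\ge b_i$ is feasible only if $\mu_{ig}\le b_i/(1-\varepsilon)$ for all $i,g$. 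So it needs the same kind of marginal bound, and your proposal offers no mechanism for obtaining it.
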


The probability in Theorem~\ref{thm:bobw} is an individual guarantee:
Each agent succeeds with probability at least $1-\varepsilon$. It does
not assert that all agents succeed simultaneously with that probability.

With respect to equal entitlements, Theorem~\ref{thm:bobw} implies:
\begin{corollary}\label{cor:mms2}
For every instance with $n$ agents, there exists a randomized allocation $A$ guaranteeing for every agent  $i\in N$ that
\[
 \E[v_i(A_i)]\ge  v_i(M)/n,\qquad
 \Pr\bigl[v_i(A_i)\ge\MMS_i(n)\bigr]\ge 1-O\left(\left(\frac{\log(n)}{n}\right)^{1/3}\right).
\] 
\end{corollary}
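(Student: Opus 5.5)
We derive Corollary~\ref{cor:mms2} directly from Theorem~\ref{thm:bobw}, specialized to equal entitlements $b_i=1/n$, so that $W=1$. The only work is choosing $\varepsilon$ as a function of $n$ so that the entitlement cap of Theorem~\ref{thm:bobw} holds. We then translate the APS guarantee into an MMS guarantee via Inequality~\eqref{eq:apsmms}.

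Concretely, we want $\varepsilon\in(0,1/2]$ with
\[
\frac1n\le\frac{\varepsilon^3}{262144\log(16/\varepsilon)}.
\]
We take $\varepsilon=C\,(\log n/n)^{1/3}$ for a suitable absolute constant $C$. Then $\varepsilon^3=C^3\log n/n$. Whenever $\varepsilon\ge n^{-1}$ we have $\log(16/\varepsilon)\le\log 16+\log n\le 5\log n$ for $n\ge 2$. Choosing $C^3\ge 5\cdot 262144$ therefore gives $\varepsilon^3/(262144\log(16/\varepsilon))\ge 1/n$. The condition $\varepsilon\ge 1/n$ holds automatically here for large $n$.

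The one constraint needing care is $\varepsilon\le1/2$, which only holds once $n$ exceeds some absolute threshold $n_0$. For $n<n_0$ the claimed probability bound $1-O((\log n/n)^{1/3})$ is vacuous once the hidden constant is large enough to make the right-hand side nonpositive. In that range it suffices to give any ex-ante proportional randomized allocation, for example all goods to a uniformly random agent, which yields $\E[v_i(A_i)]=v_i(M)/n$.

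For $n\ge n_0$, Theorem~\ref{thm:bobw} yields a randomized allocation with $\E[v_i(A_i)]\ge v_i(M)/n$ and $\Pr[v_i(A_i)\ge\APS_i(1/n)]\ge1-\varepsilon$. By Inequality~\eqref{eq:apsmms} with $d=n$, $\APS_i(1/n)\ge\MMS_i(n)$, so the event $v_i(A_i)\ge\APS_i(1/n)$ implies $v_i(A_i)\ge\MMS_i(n)$. Hence $\Pr[v_i(A_i)\ge\MMS_i(n)]\ge1-C(\log n/n)^{1/3}$. The only mildly delicate step is the bookkeeping of the logarithmic factor in the cap, handled by the bound $\log(16/\varepsilon)=O(\log n)$ above, together with the small-$n$ case.
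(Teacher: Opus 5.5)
Your proof is correct and follows the paper's intended route. You specialize Theorem~\ref{thm:bobw} to $b_i=1/n$ with $\varepsilon=\Theta((\log n/n)^{1/3})$, check the entitlement cap via $\log(16/\varepsilon)=O(\log n)$, pass from APS to MMS by Inequality~\eqref{eq:apsmms}, and absorb the finitely many small $n$ (where $\varepsilon\le 1/2$ fails) into the hidden constant, using a uniformly random recipient of all goods for ex-ante proportionality.
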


\begin{theorem}[Proportionality and APS at reduced entitlements]
\label{thm:bobw2}
For $0<\varepsilon\leq 1/2$, if $W=1$ and $\max_i b_i\le \frac{\varepsilon^3}{262144\log(16/\varepsilon)}$ then there exists a randomized allocation $A$ guaranteeing for every agent  $i\in N$ that

\[
 \E[v_i(A_i)]\ge b_i v_i(M),
 \qquad v_i(A_i)\ge\APS_i\!\left((1-\varepsilon)b_i\right)
 \quad\text{in every outcome}.
\]
\end{theorem}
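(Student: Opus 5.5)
Theorem~\ref{thm:bobw2} needs every agent to meet the reduced target $t'_i=\APS_i((1-\varepsilon)b_i)$ in every outcome, while expected values match proportional shares exactly. The plan is to decompose a proportional fractional allocation into a convex combination of integral allocations, each of which is an APS allocation for the reduced entitlements. The reduced entitlements $b'_i=(1-\varepsilon)b_i$ have total $W'=1-\varepsilon$, and each satisfies $b'_i\le\max_i b_i$. Under the stated cap, which is smaller than the cap of Theorem~\ref{thm:aps} by a factor of order $\varepsilon$, the hypotheses of Theorem~\ref{thm:aps} hold with room to spare. This slack will be needed to absorb the randomization.

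\textbf{Step 1: the fractional target.} Start from the fractional allocation $y_{ig}=b_i$ for every agent $i$ and good $g$. It is feasible because $W=1$, and it gives each agent exactly $b_iv_i(M)$. It suffices to write $y$ as a convex combination $\sum_k\lambda_k\mathbf 1[A^{(k)}]$ of integral allocations, each satisfying $v_i(A^{(k)}_i)\ge t'_i$ for all $i$. Sampling $A^{(k)}$ with probability $\lambda_k$ then gives $\mu_{ig}=b_i$, hence ex-ante proportionality, together with the ex-post guarantee. Since partial allocations are allowed and values are nonnegative, it is enough to achieve $\mu_{ig}\ge b_i$ and then discard excess probability mass. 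Discarding can only lower the inequalities $\sum_i\mu_{ig}\le 1$ and cannot hurt anyone ex-ante, because $\mu_{ig}=b_i$ exactly is what we want.

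\textbf{Step 2: duality.} By LP duality (Farkas, or a minimax argument over the polytope of feasible allocations), the decomposition exists if and only if, for every weight matrix $w_{ig}$ of arbitrary sign, some reduced-APS allocation $A$ satisfies
\[
\sum_i w_i(A_i)\ \ge\ \sum_{i,g}b_iw_{ig}.
\]
Only weights on pairs where an increase is useful matter, so after the usual monotonicity reduction we may take $w\ge0$. The task thus becomes: for an arbitrary nonnegative additive ``social weight'' $w_{ig}$, find an integral allocation in which every agent reaches her reduced APS, while the weighted welfare is at least that of the uniform fractional allocation.

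\textbf{Step 3: prove the dual statement.} I would re-run the proof of Theorem~\ref{thm:aps}, expecting it to be a randomized-rounding argument with slack: scale to entitlements summing to $1-\varepsilon$, round, and then repair failures. I would strengthen it as follows. First sample a random allocation in which agent $i$ receives each good independently with probability about $(1-\varepsilon/2)b_i$, so the expected weighted welfare is $(1-\varepsilon/2)\sum b_iw_{ig}$. Then use the remaining $\varepsilon/2$ slack to repair agents below $t'_i$, via the concentration and fix-up machinery behind Theorem~\ref{thm:aps}, while losing at most an $\varepsilon/2$ fraction of welfare. The extra factor of $\varepsilon$ in the cap, $\varepsilon^3$ instead of $\varepsilon^2$, is what makes the repair cost proportionally small: a failing agent occurs with probability $O(\varepsilon)$, and repairing her perturbs only goods of small weight.

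\textbf{Main obstacle.} The hardest part is controlling welfare loss during the repair. Removing goods from others to fix a failing agent could destroy high-$w$ assignments. I would first try to bound the loss by showing that each repair only reassigns goods whose total price, under the certifying APS prices, is $O(b_i)$. With small entitlements, this makes the expected total loss at most $O(\varepsilon)\sum b_iw_{ig}$. The welfare bound can then be closed by averaging over the randomness, while the ex-post guarantee holds deterministically after the repairs.
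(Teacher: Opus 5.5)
\textbf{There is a genuine gap in Step 3.} Your Steps 1 and 2 are sound but only restate the goal. Since $\sum_i y_{ig}=1$, any distribution whose marginals dominate $y$ has marginals equal to $y$, so passing to $w\ge 0$ is legitimate; the useful relaxation, however, runs the other way (see below). All the difficulty sits in Step 3, and the mechanism you propose there fails.

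Independent per-good sampling does not make failures rare. Take an agent who values $K=\lceil 1/((1-\varepsilon)b_i)\rceil$ goods at $1$ and all others at $0$. She has $\APS_i((1-\varepsilon)b_i)\ge 1$, because the uniform distribution over these singletons is a certificate in the sense of Lemma~\ref{lem:certificate}. Yet if she receives each good independently with probability $(1-\varepsilon/2)b_i$, she misses all $K$ of them with probability about $e^{-(1-\varepsilon/2)/(1-\varepsilon)}$, a constant. So your claim that a failing agent occurs with probability $O(\varepsilon)$ is false: a constant fraction of agents must be repaired.

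Each repair removes goods from other agents, who may then fall below their own targets. Nothing bounds this cascade, because APS prices are agent-specific, so ``total price $O(b_i)$'' has no common meaning. The welfare accounting also does not close. Starting from $(1-\varepsilon/2)\sum_{i,g}b_iw_{ig}$ and losing a further $\varepsilon/2$ fraction ends strictly below $\sum_{i,g}b_iw_{ig}$, and your dual inequality has no slack on the welfare side. Finally, the proof of Theorem~\ref{thm:aps} is not a randomized round-and-repair argument. It takes a vertex of a configuration LP and serves the few fractional agents from the reserve $R$, so there is no concentration machinery to re-run.

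\textbf{The missing idea} is to randomize over the agents' APS certificates rather than over individual goods. Then nobody ever fails, and per-good marginals are controlled exactly. The right relaxation of Step 1 is to find a randomized \emph{partial} allocation that meets $\APS_i((1-\varepsilon)b_i)$ in every outcome with $\Pr(g\in B_i)\le b_i$. Unused goods are then distributed to raise each marginal to exactly $b_i$, which only adds goods; in your dual language, hand leftovers to $\arg\max_i w_{ig}$.

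The paper obtains this by applying Proposition~\ref{prop:bobw-slack} to $\widehat b_i=(1-\varepsilon)b_i$, $\widehat W=1-\varepsilon$, $\delta=\varepsilon/2$, using $\widehat b_i/\widehat W=b_i$. Inside that proposition:
\begin{itemize}
\item Each certificate is conditioned on configurations with $K_i(P)\le 1/(2q)$ via Markov's inequality. This is where the extra factor of $\varepsilon$ in the cap is spent.
\item Configurations are drawn jointly with exact marginals, and with additive error at most $1/q$ in every capacity row, via Lemma~\ref{lem:doerr-distribution-preserving}.
\item Lemma~\ref{lem:realize} realizes them in every outcome with assignment probabilities $d_{j,iP}/|S_{j-1}|$ and $1/|E_{h_i}|$. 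These are at most $\widehat b_i/\widehat W$ after averaging over the configuration choice.
\end{itemize}
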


With respect to equal entitlements, Theorem~\ref{thm:bobw2} implies:
\begin{corollary}\label{cor:mms3}
For every instance with $n$ agents, there exists a randomized allocation $A$ guaranteeing for every agent  $i\in N$ that
\[
 \E[v_i(A_i)]\ge  v_i(M)/n,\qquad
 v_i(A_i)\ge\MMS_i(d)
 \quad\text{in every outcome}
\] 
for $d=n+O\left(n^{2/3}\log^{1/3}(n)\right)$.
\end{corollary}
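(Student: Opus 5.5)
The plan is to derive Corollary~\ref{cor:mms3} from Theorem~\ref{thm:bobw2} by setting $b_i=1/n$ for every agent, so $W=1$, and choosing $\varepsilon$ as a function of $n$. Then Inequality~\eqref{eq:apsmms} converts the reduced-entitlement APS guarantee into an MMS guarantee. The ex-ante part is immediate: Theorem~\ref{thm:bobw2} gives $\E[v_i(A_i)]\ge v_i(M)/n$ directly.

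For the ex-post part, Theorem~\ref{thm:bobw2} requires
\[
\frac1n\le\frac{\varepsilon^3}{262144\log(16/\varepsilon)}.
\]
We take $\varepsilon=C\,(\log n/n)^{1/3}$ for a suitable absolute constant $C$. Then $\log(16/\varepsilon)=O(\log n)$, so $\varepsilon^3/\log(16/\varepsilon)\ge C^3/(n\cdot O(1))$. For $C$ large this is at least $1/n$, which verifies the hypothesis.

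This choice of $\varepsilon$ must also satisfy $\varepsilon\le 1/2$, which holds for $n\ge n_0$ with $n_0$ an absolute constant. For $n<n_0$ we instead use a known ex-ante proportional randomized allocation with an ex-post $\MMS(d)$ guarantee for some $d$ bounded by a constant. One option is to give every agent the $\MMS(d)$ guarantee via the $1/2$-MMS result of \citet{babaioff2022best} combined with $\MMS_i(2n)\le\tfrac12\MMS_i(n)$, taking $d\le 2n_0$, which a constant in the $O(\cdot)$ absorbs. Alternatively, and more simply, we can enlarge the $O(\cdot)$ constant so that $d\ge m$, the number of goods; but $m$ is not controlled, so the first route is preferable. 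If neither route is acceptable, the statement can be read asymptotically.

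Next, set $d=\lceil n/(1-\varepsilon)\rceil$. Then $1/d\le(1-\varepsilon)/n$. Since $\APS_i$ is monotone in the budget (a larger budget only enlarges the feasible set of bundles), we get
\[
v_i(A_i)\ge\APS_i\bigl((1-\varepsilon)/n\bigr)\ge\APS_i(1/d)\ge\MMS_i(d),
\]
where the last step is Inequality~\eqref{eq:apsmms}. Finally,
\[
d\le n+\frac{n\varepsilon}{1-\varepsilon}+1\le n+2n\varepsilon+1=n+O\bigl(n^{2/3}\log^{1/3}n\bigr).
\]

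The only delicate points are the small-$n$ regime and the monotonicity of APS in the budget, which is straightforward from the definition.
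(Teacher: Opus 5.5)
Your proposal is correct and matches the paper's implicit derivation: specialize Theorem~\ref{thm:bobw2} to $b_i=1/n$ with $\varepsilon=\Theta((\log n/n)^{1/3})$. Then pass from $\APS_i((1-\varepsilon)/n)$ to $\MMS_i(\lceil n/(1-\varepsilon)\rceil)$ using monotonicity of APS in the budget and Inequality~\eqref{eq:apsmms}. Your small-$n$ fallback, which combines the ex-ante proportional, ex-post $1/2$-MMS result of \citet{babaioff2022best} with $\MMS_i(2n)\le\frac12\MMS_i(n)$, is valid; it fills in a detail the paper leaves to the asymptotic notation.
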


\paragraph{Hardness.} In Section~\ref{sec:hardness}, we complement our results and show that any entitlement cap guaranteeing
APS allocations whenever $W<1-\varepsilon$ must be $O(\varepsilon)$.

\begin{theorem}[Linear upper bound]\label{thm:hardness-cap}
Let
$c^\star(\varepsilon)$ be the supremum of the numbers $c\in[0,1]$ such
that every additive instance satisfying
\[
 \sum_{i\in N}b_i<1-\varepsilon,
 \qquad 0<b_i\le c\quad(i\in N)
\]
admits an allocation $(A_i)_{i\in N}$ with
$v_i(A_i)\ge\APS_i(b_i)$ for every agent $i\in N$. Then for every small enough  $\varepsilon> 0$ it holds that
\[
 c^\star(\varepsilon)\le 13\varepsilon.
\]
\end{theorem}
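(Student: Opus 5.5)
The plan is to build, for each small $\varepsilon$, an explicit instance whose individual entitlements are all at most $13\varepsilon$ and whose total entitlement is below $1-\varepsilon$, yet which has no APS allocation. That gives $c^\star(\varepsilon)\le 13\varepsilon$. The construction is not worked out below; what follows is the strategy and the mechanism I expect to drive it.

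\textbf{Step 1: a combinatorial way to certify targets.} I would use goods with only a few distinct value levels, so that each agent's target is exactly $1$ and has a simple combinatorial meaning. The template is an agent who values a set $S$ of goods at $1$ each and a set $T$ of goods at $1/2$ each, and nothing else. Her APS is at least $1$ exactly when, under every normalized price vector, she can afford either one $S$-good or two $T$-goods.

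By symmetry and convexity, the adversary's best prices are uniform within $S$ and within $T$, say $x$ on each $S$-good and $y$ on each $T$-good. These satisfy $|S|x+|T|y=1$. The adversary maximizes $\min(x,2y)$, which happens at $x=2y$. So the threshold entitlement is $b=2/(2|S|+|T|)$. This is the kind of exact APS computation I need, via the min--max definition or equivalently its fractional-cover dual: $\APS_i(b)\ge z$ iff some distribution over bundles of value at least $z$ covers each good with probability at most $b$.

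\textbf{Step 2: a Hall-type obstruction from integrality waste.} For agents of this form, an APS allocation must give each one either an $S$-good or two $T$-goods. Feasibility then reduces to a Hall/flow condition. The entitlement accounting behaves like a fractional relaxation: an agent "uses" entitlement $b$, which corresponds fractionally to buying $1/2$ of an $S$-good or one $T$-good's worth. Integrality forces whole pairs of $T$-goods.

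The idea is to combine several such gadgets, with disjoint good sets and different value ratios (halves, thirds, and so on). In each gadget the integral demand should exceed the supply by one agent, while the fractional entitlement sum of that gadget falls short of its share of the goods by $\Theta(b)$. Each gadget therefore wastes a constant multiple of the cap $c$. To reach total slack $\varepsilon$ with a single violating gadget, I want the waste per gadget to be about $\varepsilon$ when $c\approx 13\varepsilon$.

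A single large-entitlement agent of size $c$ whose valuation forces a bundle "worth" about $13$ of her fractional units could supply this waste. For example, she could require a contiguous block of goods that a rival group also needs, so that her requirement overlaps, Hall-wise, with that group.

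\textbf{Step 3: tuning and verification.} Once the gadget is fixed, three things remain to check:
\begin{itemize}
\item every $b_i\le c$;
\item $\sum_i b_i<1-\varepsilon$, with the constant $13$ arising from this bookkeeping;
\item every agent's target really is $1$, which needs a matching lower bound on APS from an explicit fractional bundle cover.
\end{itemize}
Finally, I would verify Hall's violation by counting goods against minimal demands.

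\textbf{Main obstacle.} The hard part is designing the gadget so that the integrality waste scales with $c$ rather than being $O(1/m)$. With identical goods, $\sum_i\lfloor b_i m\rfloor\le m$ always holds, so the heterogeneity must prevent the waste from averaging out across agents.

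I would first try a two-level design. There is one "heavy" agent with $b=c$ whose target forces her to take about $1/c$ goods from a pool. There is also a mass of tiny agents, each of whom needs exactly one good from that pool, with APS thresholds computed as in Step 1. The goal is that the heavy agent's forced bundle overshoots her fractional share by a constant factor. That overshoot has to be certified by the explicit APS computation, which is the delicate step.
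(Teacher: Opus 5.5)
\textbf{There is a gap: the proposal contains no instance, and the gadgets it sketches provably cannot supply one.} The whole content of the theorem is the existence of a bad instance, and you leave its construction open.

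\textbf{Why your gadgets fail.} With identical valuations and a common entitlement $b$, Lemma~\ref{lem:certificate} gives $\APS_i(b)\ge T$ exactly when the fractional packing number $p^*(T)$ is at least $1/b$. Here $p^*(T)$ is the maximum of $\sum_P\tau_P$ over nonnegative weights on bundles of value at least $T$ such that every good has load at most one. An allocation giving all $n$ agents value $T$ exists exactly when the integral packing number $p(T)$ is at least $n$. A counterexample with $nb<1$ therefore needs $p(T)<n<1/b\le p^*(T)$, that is, an integrality gap $p^*(T)-p(T)>1$.
\begin{itemize}
\item For your template (value $1$ on $S$, $1/2$ on $T$, target $1$) one has $p^*=|S|+|T|/2$ and $p=|S|+\lfloor|T|/2\rfloor$. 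The gap is at most $1/2$, and with levels $1$ and $1/k$ it is still below one.
\item Gluing such gadgets on disjoint goods does not help. An agent who values only her gadget's goods has the same APS as in that gadget alone, since the adversary need not price anything else. So the agents of an infeasible gadget already have total entitlement above one; there is no per-gadget ``share'' of the unit budget to exploit.
\item The heavy-agent design fails for a similar reason. Her certificate has expected pool usage at most $c|P|$, so some bundle in its support meets her target using at most $c|P|$ pool goods. If the tiny agents value the pool uniformly, each needs $b_t\ge1/|P|$, so there are fewer than $(1-c)|P|$ of them, and everyone fits.
\end{itemize}

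\textbf{What is missing, and how the paper supplies it.} You need (i) an instance whose integrality gap exceeds one, and (ii) a way to shrink the entitlements without shrinking that absolute gap.

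For (i), the paper uses a known thirteen-good instance with values $(137,135,\ldots,8,6)$ and target $153$. It has an explicit fractional cover of total weight $51/10$ in which every good has load exactly one. A short excess-counting argument shows that no five disjoint bundles reach $153$: the total excess is $809-5\cdot153=44$, yet the bundle containing good $3$ and one of the bundles containing goods $1,2$ must each take a good worth at least $49$, forcing excess at least $15+31=46$.

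For (ii), the paper adds $n-5$ singleton goods of value exactly $153$, each covered with weight one, and gives all $n$ agents entitlement $\beta_n=10/(10n+1)$. Then $p^*=n+1/10$ and $p=n-1$, the certificate works at $\beta_n$, and the total entitlement is $1-1/(10n+1)$, a slack of roughly $\beta_n/10$. Choosing $n$ largest with $10n+1<1/\varepsilon$ yields $c^\star(\varepsilon)\le\beta_n\le10\varepsilon/(1-10\varepsilon)\le13\varepsilon$.

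No heterogeneity in entitlements or valuations is needed. The heterogeneity lies entirely in the goods' values, and that is what defeats the averaging $\sum_i\lfloor b_im\rfloor\le m$ you correctly identified as the obstacle.
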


\paragraph{Polynomial implementations.}
In Section~\ref{sec:implemetaion}
we discuss how to transform our existence results into polynomial-time algorithms for finding such (randomized) allocations.

\subsection{Additional Related Work}

\paragraph{Multiplicative MMS approximation.}
The complementary objective of guaranteeing every agent a fraction
of her original MMS has been studied extensively.
\citet{kurokawa2018fair} prove that a $2/3$-MMS allocation always
exists for additive valuations. \citet{ghodsi2021fair} improve this
guarantee to $3/4$, and \citet{garg2021improved} establish the existence
of a $(3/4+1/(12n))$-MMS allocation, while also giving a strongly
polynomial-time algorithm for finding a $3/4$-MMS allocation.
\citet{akrami2024breaking} obtain the first improvement over $3/4$
by a constant independent of $n$, proving a guarantee of
$3/4+3/3836$. Subsequent work improves the guarantee to $10/13$
\citep{heidari2026improved} and then to $7/9$
\citep{huang2025fptas}. The latter work also gives an algorithm
computing a $(7/9-\varepsilon)$-MMS allocation in time polynomial
in the input size and $1/\varepsilon$.

On the impossibility side, \citet{feige2021tight} construct an
instance with three agents and nine goods in which every allocation
gives some agent at most a $39/40$ fraction of her MMS.
For every $n\ge4$, they also construct $n$-agent instances with
an upper bound of $1-1/n^4$ on the simultaneously attainable
fraction of MMS. \citet{ezra2026improved} strengthen the constant
impossibility bound to $20/21$ using an instance with four agents
and eleven goods. They also strengthen the asymptotic bound to
$1-\Omega((\log n)^{-2})$ for every sufficiently large $n$.

Ordinal guarantees concern a different approximation parameter.
For example, if an agent values each of $n$ goods at one, then
$\mathrm{MMS}_i(n)=1$, whereas $\mathrm{MMS}_i(n+1)=0$, since
every partition into $n+1$ bundles contains an empty bundle.
Thus, even an ordinal denominator of $n+1$ does not by itself
imply a positive multiplicative approximation to the original MMS.

\paragraph{Unequal entitlements and other best-of-both-worlds guarantees.}
\citet{farhadi2019fair} introduce weighted MMS for agents with
unequal entitlements and establish a tight $1/n$ approximation
guarantee for additive valuations.
\citet{babaioff2024fair} introduce APS and give a polynomial-time
algorithm guaranteeing every agent at least $3/5$ of her APS
for arbitrary entitlements.
\citet{chakraborty2024weighted} study relaxations of weighted
envy-freeness and proportionality and their relationships with
share-based benchmarks; \citet{suksompong2025weighted} surveys
the broader literature on weighted fair division.
More generally, \citet{babaioff2025fair} study which share
benchmarks are always simultaneously attainable and how such
benchmarks compare. For arbitrary entitlements,
\citet{babaioff2025share} show that every agent can simultaneously
receive at least half the value prescribed by any feasible share
benchmark she chooses, even when different agents choose different
benchmarks.

Alongside the share-based best-of-both-worlds results discussed
above, another line of work combines envy-based guarantees before
and after randomization.
For additive valuations and equal entitlements,
\citet{aziz2024best} give a polynomial-time algorithm computing a
randomized allocation that is ex-ante envy-free and ex-post envy-free up to one
good (EF1). Thus, no agent prefers another agent's allocation in
expectation, and in every outcome any envy can be eliminated by
removing one good from the envied bundle.
For unequal entitlements, \citet{hoefer2024best} give a strongly
polynomial-time algorithm computing a randomized allocation that is ex-ante
weighted envy-free and simultaneously satisfies two ex-post
guarantees: weighted proportionality up to one good (WPROP1)
and weighted transfer envy-freeness up to one good
($\mathrm{WEF}(1,1)$).
The first means that each agent can reach her proportional share
by adding at most one good to her bundle. The second means that
any weighted envy can be eliminated by hypothetically transferring
one good from the envied agent to the envying agent.

\paragraph{Proof ingredients and broader background.}
We use the reduction to ordered instances of
\citet{bouveret2016characterizing}, in the formulation of
\citet{barman2020approximation}, and the faithful
rounding guarantee of
\citet{babaioff2022best}.
Our joint-sampling lemma uses the linear-dependence principle
underlying discrepancy rounding
\citep{beck1981integer,doerr2007roundings}.
For a broader overview of fair division of indivisible goods and
its open questions, we refer to \citet{amanatidis2023fair}.

\subsection{Preliminaries}
\label{sec:prelim}
\paragraph{Ordered goods.}
We use the standard reduction to ordered additive instances
\citep{barman2020approximation}.
For an original instance, let $[m]=\{1,\ldots,m\}$ be the set of ranks,
and let $v_i^{\rm ord}(r)$ be agent $i$'s $r$-th largest singleton
value. These values define her additive ordered valuation on $[m]$.
The reduction preserves her APS and total value; in particular,
$v_i^{\rm ord}([m])=v_i(M)$. Each allocation of ranks
$B=(B_i)_{i\in N}$ can be converted to an original allocation
$A=(A_i)_{i\in N}$ satisfying
\[
 v_i(A_i)\ge v_i^{\rm ord}(B_i)\qquad(i\in N).
\]
Applying this pointwise guarantee to each lottery outcome also preserves
the expected-value and individual-probability guarantees above.
Throughout the existence proofs, we therefore identify goods with
ranks, write $v_i$ for the ordered valuation, and assume
\[
 M=[m],\qquad v_i(1)\ge\cdots\ge v_i(m)\ge0\quad(i\in N).
\]

\paragraph{APS Certificates.} A distribution $D$ on bundles assigns a number $D(P)\ge0$ to each
$P\subseteq M$, with $\sum_{P\subseteq M}D(P)=1$. Its support is
$\{P\subseteq M:D(P)>0\}$, and $P\sim D$ means that $P$ is sampled
from this distribution.

\begin{lemma}[APS Certificates {\citep{babaioff2024fair}}]
\label{lem:certificate}
For every agent $i\in N$ and every number $t\ge0$, we have
$t\le\APS_i(b_i)$ if and only if there is a distribution $D$ on
bundles satisfying
\[
 v_i(P)\ge t\quad(P\subseteq M:D(P)>0),\qquad
 \Pr_{P\sim D}(g\in P)\le b_i\quad(g\in M).
\]
Such a distribution is called an \emph{APS certificate} for $t$.
\end{lemma}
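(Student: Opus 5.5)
The plan is to prove the two directions separately: the easy direction (certificate implies $t\le\APS_i(b_i)$) by an averaging argument, and the converse by a finite zero-sum minimax argument, equivalently LP duality. I first dispose of $M=\varnothing$. There $\APS_i(b_i)=0$. If $t=0$, the point mass on $\varnothing$ is a certificate, since the marginal condition is vacuous. If $t>0$, no bundle has value $\ge t$, so neither side holds. From now on $M\neq\varnothing$, so the price simplex $\Delta=\{\pi\ge 0:\sum_g\pi_g=1\}$ is nonempty and compact. Set $\mathcal S_t=\{P\subseteq M: v_i(P)\ge t\}$, which is a finite family.

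\emph{Certificate $\Rightarrow$ share.} Let $D$ be a certificate and $\pi\in\Delta$ any price vector. By linearity,
\[
\E_{P\sim D}[\pi(P)]=\sum_{g\in M}\pi_g\Pr_{P\sim D}(g\in P)\le b_i\sum_g\pi_g=b_i.
\]
Hence some $P$ in the support has $\pi(P)\le b_i$. That $P$ is affordable under $\pi$ and satisfies $v_i(P)\ge t$. So the inner maximum in the definition of $\APS_i(b_i)$ is at least $t$ for every $\pi$, and therefore $\APS_i(b_i)\ge t$.

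\emph{Share $\Rightarrow$ certificate.} Assume $t\le\APS_i(b_i)$. Then for every $\pi\in\Delta$ some affordable bundle has value at least $t$, i.e.\ $\min_{P\in\mathcal S_t}\pi(P)\le b_i$. In particular $\mathcal S_t\neq\varnothing$, because otherwise the inner maximum would always be $<t$.

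Now consider the zero-sum game in which the price player chooses a good $g\in M$ (mixed strategy $\pi\in\Delta$), the bundle player chooses $P\in\mathcal S_t$ (mixed strategy $D$ on $\mathcal S_t$), and the payoff to the price player is $\mathbf 1[g\in P]$. Under mixed strategies the expected payoff is
\[
\E_{P\sim D}[\pi(P)]=\sum_g\pi_g\Pr_D(g\in P).
\]
Both pure strategy sets are finite, so von Neumann's minimax theorem gives
\[
\max_{\pi\in\Delta}\min_{P\in\mathcal S_t}\pi(P)\;=\;\min_{D}\max_{g\in M}\Pr_{P\sim D}(g\in P).
\]
Here I use that a best response to a fixed mixed strategy can always be taken pure on the other side. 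The left-hand side is at most $b_i$ by the previous paragraph. Hence a minimizing $D$, which exists by compactness, satisfies $\Pr_D(g\in P)\le b_i$ for all $g$. Its support lies in $\mathcal S_t$, so every supported bundle has value at least $t$, and $D$ is the required certificate.

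The only step needing care is this converse direction. The main points to verify are that the minimax theorem applies, which holds because the game is finite, and that the reduction from mixed to pure best responses is correct. Equivalently, one can phrase the converse as LP duality. The primal LP minimizes $\lambda$ subject to $\sum_{P\in\mathcal S_t}D(P)\mathbf 1[g\in P]\le\lambda$ for all $g$, $\sum_P D(P)=1$, and $D\ge0$. Its dual is $\max_{\pi\in\Delta}\min_{P\in\mathcal S_t}\pi(P)$. The remaining edge cases, $M=\varnothing$ and $\mathcal S_t=\varnothing$, were handled above.
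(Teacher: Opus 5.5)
Your proof is correct. The averaging argument handles the certificate-to-share direction, and the finite zero-sum game with payoff $\mathbf 1[g\in P]$ (equivalently, LP duality) handles the converse, with the edge cases $M=\varnothing$ and $\mathcal S_t=\varnothing$ treated properly. The paper does not reprove this lemma but imports it from \citet{babaioff2024fair}. Your argument is the standard duality argument behind the equivalence of the price-based and distribution-based definitions of APS there.
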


The next lemma applies to any finite agent set $J$, any finite good set
$U$, and any nonnegative additive valuations $(u_i)_{i\in J}$.

\begin{lemma}[Faithful Rounding {\citep{babaioff2022best}}]
\label{lem:round}
Let $y=(y_{ig})_{i\in J,g\in U}$ be a fractional allocation. For
every agent $i\in J$, let $m_i\ge0$ satisfy
$u_i(g)\le m_i$ whenever $y_{ig}>0$. There is a randomized allocation   $(B_i)_{i\in J}$ such that
\[
 \Pr(g\in B_i)=y_{ig}\quad(i\in J,\ g\in U)
\]
and, in every outcome, each agent $i\in J$ satisfies
\[
 B_i\subseteq\{g\in U:y_{ig}>0\},\qquad
 u_i(B_i)\ge\sum_{g\in U}y_{ig}u_i(g)-m_i.
\]
\end{lemma}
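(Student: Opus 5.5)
The plan is to obtain the randomized allocation as a convex combination of integral allocations that respect floor and ceiling constraints on a well-chosen family of sums of the $y_{ig}$. I would use the bihierarchy rounding theorem of Budish, Che, Kojima and Milgrom, which is not among the paper's earlier results and is imported here. It says the following. Suppose the constraint sets on the entries of a matrix split into two laminar families. Then every fractional matrix can be written as a convex combination of integral matrices. Moreover, for every constraint set $S$, each integral matrix in the combination has its $S$-sum between $\lfloor\sum_{(i,g)\in S}y_{ig}\rfloor$ and $\lceil\sum_{(i,g)\in S}y_{ig}\rceil$.

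First, add a dummy agent $0$ with $y_{0g}=1-\sum_{i\in J}y_{ig}$, so that every good has total mass exactly $1$. The first laminar family $\mathcal H_1$ consists of the good-columns $\{(i,g):i\in J\cup\{0\}\}$ for $g\in U$. Since these sum to exactly $1$, every integral matrix in the decomposition assigns each good to exactly one agent, possibly the dummy; dropping the dummy gives an integral allocation.

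The second family $\mathcal H_2$ is built agent by agent. For agent $i\in J$, list her fractional support as $g_1,\dots,g_s$ with $w_1\ge\cdots\ge w_s$, where $w_k=u_i(g_k)$, breaking ties arbitrarily. Put into $\mathcal H_2$ the prefixes $\{(i,g_1),\dots,(i,g_k)\}$ for $k=1,\dots,s$, together with the singletons $\{(i,g)\}$ for all $g$. Prefixes of one agent are nested, different agents' sets are disjoint, and each singleton is either inside a prefix or disjoint from it, so $\mathcal H_2$ is laminar.

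Draw one integral matrix from the decomposition with probability equal to its coefficient, and let $B_i$ be the set of goods it assigns to agent $i$. The three conclusions then follow.

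\emph{Marginals.} The convex combination averages to $y$, so $\Pr(g\in B_i)=y_{ig}$.

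\emph{Support.} If $y_{ig}=0$, the singleton constraint has ceiling $0$, so in every outcome $g\notin B_i$; hence $B_i\subseteq\{g:y_{ig}>0\}$.

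\emph{Value.} Let $Y_k=\sum_{\ell\le k}y_{ig_\ell}$ and $X_k=|B_i\cap\{g_1,\dots,g_k\}|$. The prefix constraints give $X_k\ge\lfloor Y_k\rfloor>Y_k-1$. Set $w_{s+1}=0$. Abel summation gives
\[
 u_i(B_i)=\sum_{k=1}^{s}(w_k-w_{k+1})X_k,\qquad
 \sum_{g}y_{ig}u_i(g)=\sum_{k=1}^{s}(w_k-w_{k+1})Y_k .
\]
The coefficients $w_k-w_{k+1}$ are nonnegative, so
\[
 u_i(B_i)\ge\sum_g y_{ig}u_i(g)-\sum_{k=1}^{s}(w_k-w_{k+1})=\sum_g y_{ig}u_i(g)-w_1 .
\]
Finally $w_1\le m_i$, because $g_1$ lies in the fractional support of $i$.

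The main obstacle is checking that the constraint structure really is a bihierarchy; everything else is bookkeeping. The key point is that the per-agent sorted prefixes, which depend on different orders for different agents, can all go in one family, because sets belonging to different agents are disjoint. The good-columns then form the other family. One must also note that the dummy agent needs no constraints in $\mathcal H_2$, and that the equality constraints on goods are preserved exactly because their sums are integers. If I wanted to avoid citing the bihierarchy theorem, I would instead prove the decomposition directly by iterative pipage/dependent rounding on the bipartite network. That network is the good-columns flowing into the agents' nested prefix chains, and it is totally unimodular, so its polytope has integral vertices and any fractional point is a convex combination of them.
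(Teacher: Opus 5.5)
Your proof is correct. The paper gives no proof of this lemma; it imports it verbatim from \citet{babaioff2022best}. Your argument is therefore a self-contained justification rather than a reconstruction of something in the text.

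What you wrote is essentially the ``utility guarantee'' device of Budish, Che, Kojima and Milgrom. Each agent's top-$k$ sets go into the agent-side hierarchy, so every realized prefix count satisfies $X_k\ge\lfloor Y_k\rfloor$. Abel summation then turns these prefix bounds into a value bound.

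The more elementary route usually given for this statement works with slots instead. Split each agent's fractional mass, in decreasing order of $u_i$, into consecutive unit-mass slots. Decompose the resulting fractional bipartite matching between goods and slots using integrality of the bipartite matching polytope (Birkhoff--von Neumann); full slots stay exactly matched because $y$ lies on that face. The good in slot $k$ is worth at least every good of slot $k+1$, so the total loss is at most the fractional value of slot $1$, hence at most $m_i$.

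The slot argument needs only bipartite matching integrality and produces the same prefix bounds. Your route is shorter once the bihierarchy theorem is granted, and it makes extra laminar constraints easy to impose, but it imports a heavier result.

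One detail deserves a sentence: the dummy's entries must stay nonnegative in the integral matrices. This is automatic if you invoke the theorem as stated, since its pure assignments have nonnegative integer entries. You can also add the dummy's singletons to $\mathcal H_2$ without breaking laminarity. In your totally unimodular fallback, however, you must impose $x_{0g}\ge0$ explicitly. Otherwise an integral vertex could give $g$ to two agents and put $-1$ in the dummy's entry while satisfying every column and prefix constraint.
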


\section{Building Allocations from APS Certificates}
\label{sec:ingredients}

Fix an integer $q\ge16$. Throughout this section, we assume that
 $b_i\le1/q^2$ for every $i\in N$, and that the set
$N_+=\{i\in N:t_i>0\}$ of active agents is nonempty (otherwise, Theorems~\ref{thm:aps}-\ref{thm:bobw2} are immediate).
All configurations and requests (defined later) in this section concern
active agents.

\subsection{Capped Values and the Partitions of Items}

For every active agent $i\in N_+$, define her capped additive valuation by
\[
 u_i(g)=\min\{v_i(g)/t_i,1\}\quad(g\in M),\qquad
 u_i(B)=\sum_{g\in B}u_i(g)\quad(B\subseteq M).
\]
Choose an APS certificate $D_i$ for $t_i$ using
Lemma~\ref{lem:certificate}. Thus $D_i$ is a distribution on bundles
$P\subseteq M$ with $v_i(P)\ge t_i$ throughout its support and
$\Pr_{P\sim D_i}(g\in P)\le b_i$ for every good $g$.

\begin{claim}[Capping preserves the target]
\label{cl:capping}
For every active agent $i$, the certificate $D_i$ satisfies
$u_i(P)\ge1$ throughout its support, and $u_i(M)\ge1/b_i$.
\end{claim}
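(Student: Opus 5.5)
The plan is to argue pointwise, bundle by bundle, for the first claim, and then to average over the certificate $D_i$ for the second.

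\emph{First claim.} Fix $P$ in the support of $D_i$, so $v_i(P)\ge t_i>0$. There are two cases.
\begin{itemize}
\item If some good $g\in P$ has $v_i(g)\ge t_i$, then $u_i(g)=1$. Since all capped values are nonnegative, $u_i(P)\ge u_i(g)=1$.
\item Otherwise every $g\in P$ has $v_i(g)<t_i$, so no capping occurs on $P$: $u_i(g)=v_i(g)/t_i$ for all $g\in P$. Then $u_i(P)=v_i(P)/t_i\ge1$.
\end{itemize}
The division by $t_i$ is legitimate because $i$ is active.

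\emph{Second claim.} I would compute $\E_{P\sim D_i}[u_i(P)]$ in two ways.
\begin{itemize}
\item By the first claim, every bundle in the support has $u_i(P)\ge1$, so the expectation is at least $1$.
\item By additivity and linearity of expectation,
\[
\E_{P\sim D_i}[u_i(P)]=\sum_{g\in M}\Pr_{P\sim D_i}(g\in P)\,u_i(g)\le b_i\sum_{g\in M}u_i(g)=b_i\,u_i(M).
\]
This uses $u_i(g)\ge0$ together with the certificate constraint $\Pr_{P\sim D_i}(g\in P)\le b_i$ from Lemma~\ref{lem:certificate}.
\end{itemize}
Combining the two gives $1\le b_i u_i(M)$, that is, $u_i(M)\ge1/b_i$, since $b_i>0$.

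There is no real obstacle here. The only points to state carefully are that the support bound on $v_i(P)$ transfers to capped values, which is exactly the case split above, and that the marginal constraint of the certificate is an upper bound, which is why nonnegativity of $u_i$ is needed for the inequality to point the right way.
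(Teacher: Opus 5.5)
Your proposal is correct and follows the paper's proof essentially verbatim: the same case split on whether $P$ contains a good with $v_i(g)\ge t_i$, then the same two-sided bound $1\le\E_{P\sim D_i}[u_i(P)]=\sum_{g\in M}u_i(g)\Pr_{P\sim D_i}(g\in P)\le b_i u_i(M)$ using the certificate's marginal constraints. Your explicit remarks on $t_i>0$ and on the nonnegativity of $u_i$ are details the paper leaves implicit.
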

\begin{proof}
Fix an active agent $i\in N_+$ and a bundle $P$ in the support of $D_i$.
If some good $g\in P$ has $v_i(g)\ge t_i$,
then $u_i(g)=1$. Otherwise no value in $P$ is capped, and
$u_i(P)=v_i(P)/t_i\ge1$. Therefore
\[
 1\le\E_{P\sim D_i}[u_i(P)]
 =\sum_{g\in M}u_i(g)\Pr_{P\sim D_i}(g\in P)
 \le b_i u_i(M),
\]
which concludes the proof of the claim.
\end{proof}

The capped values remain nonincreasing in the order of the goods.
They are at most one, so Claim~\ref{cl:capping} implies
$m\ge u_i(M)\ge1/b_i\ge q^2>q$.

For every positive integer $h$, define the prescribed bucket size
\[
 w_h=\left\lceil q(1+1/q)^{h-1}\right\rceil.
\]
Let $H$ be the least positive integer with $\sum_{h=1}^H w_h\ge m$.
We create two parallel partitions of the goods: (1) consecutive buckets
$C_1,\ldots,C_H$, and (2) reserve goods $R$, buffer goods $E$, and ordinary
goods $S$. The buckets have the prescribed size (i.e., for every $h<H$, $|C_h|=w_h$), except that the last
is shortened if necessary. Explicitly, for $1\le h\le H$, set
\[
 C_h=\left\{1+\sum_{k=1}^{h-1}w_k,\ldots,
              \min\!\left\{m,\sum_{k=1}^h w_k\right\}\right\}.
\]
Define the reserve goods by
\[
 R=\{1,1+q,1+2q,\ldots\}\cap M,
 \] and the buffer goods in bucket $C_h$ by \[
 E_h=\{\text{the first $\lfloor|C_h|/q\rfloor$ goods of $C_h\setminus R$}\} .
\]
Define the complete buffer set, the ordinary set, and its bucket portions by
\[
 E=\bigcup_{h=1}^H E_h,\qquad
 S=M\setminus(R\cup E),\qquad S_h=S\cap C_h.
\]
Since $m>q$ and $|C_1|=q$, we have $H\ge2$. 

\begin{claim}
\label{cl:geometry}
The buffer sets are well-defined. Moreover,
\[
 |C_j|\le(1+2/q)|C_{j-1}|\quad(2\le j\le H),\qquad
 |S_h|\ge(1-3/q)|C_h|\quad(1\le h<H).
\]
\end{claim}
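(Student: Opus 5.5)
We verify the three assertions directly from the definitions, using only $q\ge16$, $w_h=\lceil q(1+1/q)^{h-1}\rceil$, and the fact that the buckets are consecutive intervals of $M$.

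\emph{Well-definedness of $E_h$.} We must show $|C_h\setminus R|\ge\lfloor|C_h|/q\rfloor$. The reserve set $R$ consists of the goods congruent to $1$ modulo $q$. Any interval of $\ell$ consecutive integers contains at most $\lceil \ell/q\rceil\le \ell/q+1$ elements of $R$. Hence $|C_h\setminus R|\ge |C_h|-|C_h|/q-1$. We need this to be at least $|C_h|/q$, which holds whenever $|C_h|(1-2/q)\ge1$. That covers every nonempty bucket, since $|C_h|\ge1$ and $q\ge16$. (Each $C_h$ is nonempty by minimality of $H$.)

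\emph{Growth bound.} For $2\le j<H$ both buckets have full size, so we compare $w_j$ with $w_{j-1}$. Write $x=q(1+1/q)^{j-2}\ge q$. Then $w_{j-1}=\lceil x\rceil$, and
\[
 w_j=\lceil x(1+1/q)\rceil\le x+x/q+1\le \lceil x\rceil(1+1/q)+1\le \lceil x\rceil(1+2/q),
\]
where the last step uses $\lceil x\rceil\ge q$, so that $1\le \lceil x\rceil/q$. For $j=H$, the last bucket satisfies $|C_H|\le w_H$ while $|C_{H-1}|=w_{H-1}$, so the same bound applies. This needs $H-1<H$, i.e.\ $C_{H-1}$ has full size, which holds because only the last bucket is shortened.

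\emph{Ordinary portion.} For $h<H$, we have $|C_h|=w_h\ge q$. Since $S_h=C_h\setminus(R\cup E_h)$,
\[
 |S_h|\ge |C_h|-|R\cap C_h|-|E_h|\ge |C_h|-\left(\frac{|C_h|}{q}+1\right)-\frac{|C_h|}{q}.
\]
Using $1\le |C_h|/q$, this gives $|S_h|\ge (1-3/q)|C_h|$.

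The only delicate point is the interval count for $R$, namely that an interval of length $\ell$ meets $R$ in at most $\lceil \ell/q\rceil$ points, together with absorbing the additive $+1$ terms using $|C_h|\ge q$. In the well-definedness step, $|C_H|$ may be smaller than $q$. There, $\lfloor|C_H|/q\rfloor$ may be $0$, and the bound $|C_H|(1-2/q)\ge1$ still suffices.
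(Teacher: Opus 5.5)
Your argument matches the paper's proof essentially line by line. It uses the interval count $|R\cap C_h|\le |C_h|/q+1$, absorbs the additive $1$ via $|C_h|\ge q$ for full buckets, and uses $w_j\le(1+1/q)w_{j-1}+1\le(1+2/q)w_{j-1}$ with $C_{j-1}$ always full.

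One step is misstated, however. You claim $|C_h|(1-2/q)\ge1$ for every nonempty bucket, but this fails when $|C_h|=1$, where it reads $1-2/q\ge1$. That case does occur for the shortened last bucket: if $m=1+\sum_{h=1}^{K}w_h$ for some $K$, then $H=K+1$ and $|C_H|=1$. Your closing remark about $C_H$ repeats the same slip.

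The conclusion still holds, because then $\lfloor |C_h|/q\rfloor=0$ and there is nothing to prove. The clean fix is the paper's case split. If $|C_h|<q$, then $E_h=\varnothing$ trivially. Only for $|C_h|\ge q$ do you need $|C_h\setminus R|\ge |C_h|-|C_h|/q-1\ge |C_h|-2|C_h|/q\ge |C_h|/q$.
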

\begin{proof}
Fix a bucket $C_h$ and write $k=|C_h|$. If $k<q$, its buffer set
is empty. If $k\ge q$, it contains at most $k/q+1$ reserve goods, so
\[
 |C_h\setminus R|\ge k-k/q-1\ge k-2k/q\ge k/q.
\]
Thus there are enough goods to define $E_h$. For a full bucket,
\[
 |S_h|\ge k-(k/q+1)-k/q\ge(1-3/q)k.
\]
For every $2\le j\le H$, the preceding bucket is full. Hence
\[
 |C_j|\le w_j\le(1+1/q)w_{j-1}+1
 \le(1+2/q)|C_{j-1}|,
\]
which concludes the proof of the claim.
\end{proof}

\subsection{Cutoffs, Configurations, and Requests}

For every active agent $i$, define her cutoff index and tail by
\[
 h_i=\min\bigl(\{h\in\{1,\ldots,H-1\}:|C_h|\ge q/b_i\}
                   \cup\{H\}\bigr),\qquad
 T_i=\bigcup_{j=h_i+1}^H C_j.
\]
Her head is $M\setminus T_i$. Thus $h_i$ is the first full bucket
reaching size $q/b_i$, or the last bucket if there is no such full
bucket. When $h_i=H$, the tail is empty.

\begin{claim}[Enough buffers at every cutoff]
\label{cl:buffers}
For every $h\in\{1,\ldots,H-1\}$,
\[
 |\{i\in N_+:h_i=h\}|\le\left\lfloor\frac{|C_h|}{q}\right\rfloor
 =|E_h|.
\]
\end{claim}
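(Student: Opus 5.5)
The plan is to bound the number of agents with cutoff $h$ using the total entitlement $W\le1$: every agent whose cutoff is $h$ has entitlement larger than $q/|C_h|$. The equality $\lfloor|C_h|/q\rfloor=|E_h|$ holds by definition of $E_h$, since Claim~\ref{cl:geometry} shows $C_h\setminus R$ has at least $\lfloor|C_h|/q\rfloor$ goods. So only the inequality needs proof.

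First I handle $h=1$. Since $m>q$ and $|C_1|=w_1=q$, the bucket $C_1$ is full. If $h_i=1$ then $q\ge q/b_i$, that is, $b_i\ge1$. This contradicts $b_i\le1/q^2<1$. Hence no active agent has $h_i=1$, and the bound $0\le\lfloor|C_1|/q\rfloor=1$ holds trivially.

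Next, fix $2\le h\le H-1$ and an active agent $i$ with $h_i=h$.
\begin{itemize}
\item Since $h<H$, the minimum defining $h_i$ is attained in the first set, so $C_h$ is full and $|C_h|\ge q/b_i$.
\item Since $h-1\in\{1,\ldots,H-1\}$ was not selected, $|C_{h-1}|<q/b_i$.
\item Both buckets are full because $h\le H-1$. The sizes $w_k$ are nondecreasing in $k$, so $|C_{h-1}|=w_{h-1}\le w_h=|C_h|$.
\end{itemize}
Combining these gives $b_i>q/|C_{h-1}|\ge q/|C_h|$.

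Now let $k$ be the number of active agents with $h_i=h$. Summing the last bound over these agents, and using that entitlements are positive and $W\le1$ throughout the model, gives
\[
 k\cdot\frac{q}{|C_h|}<\sum_{i:h_i=h}b_i\le W\le1 .
\]
Hence $k<|C_h|/q$, and since $k$ is an integer, $k\le\lfloor|C_h|/q\rfloor$. This covers every $h\le H-1$.

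There is no serious obstacle. The only points needing care are the case $h=1$, where there is no preceding bucket to compare with, and checking that the preceding bucket is full so that its size is exactly $w_{h-1}$.
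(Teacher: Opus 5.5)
Your proof is correct and follows the paper's argument: each agent with cutoff $h<H$ satisfies $|C_h|\ge q/b_i$, hence $b_i\ge q/|C_h|$. Summing over these agents against $W\le1$ bounds their number by $|C_h|/q$, and integrality gives the floor. The detour through $C_{h-1}$ to make the inequality strict, and the separate treatment of $h=1$, are unnecessary: the non-strict bound $k\le|C_h|/q$ already gives $k\le\lfloor|C_h|/q\rfloor$ for integer $k$. The strict version also fails literally when $k=0$, where it would read $0<0$.
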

\begin{proof}
Every agent counted on the left has $b_i\ge q/|C_h|$. Therefore
\[
 |\{i\in N_+:h_i=h\}|\frac q{|C_h|}
 \le\sum_{i\in N_+:h_i=h}b_i\le W\le1.
\]
Taking the integer part gives the claim.
\end{proof}

\begin{claim}[Cutoff bound]
\label{cl:cutoff}
Every active agent $i$ satisfies $h_i\le2+2q\log(1/b_i)$.
\end{claim}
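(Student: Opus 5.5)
We argue directly from the geometric growth of the prescribed bucket sizes. If $h_i=1$ there is nothing to prove, since $2+2q\log(1/b_i)\ge2$; likewise if $h_i\le 2$. So suppose $h_i\ge3$. By the definition of $h_i$, every bucket $C_h$ with $h<h_i$ and $h\le H-1$ is full and satisfies $|C_h|<q/b_i$. In particular, bucket $C_{h_i-1}$ is full, because $h_i-1\le H-1$. Hence
\[
 q(1+1/q)^{h_i-2}\le w_{h_i-1}=|C_{h_i-1}|<q/b_i ,
\]
which gives $(1+1/q)^{h_i-2}<1/b_i$. Taking logarithms yields $(h_i-2)\log(1+1/q)<\log(1/b_i)$.

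Next we lower-bound $\log(1+1/q)$. Since $q\ge16$ we have $1/q\le 1$, and on $[0,1]$ the bound $\log(1+x)\ge x/2$ holds. Indeed, $\log(1+x)\ge x-x^2/2\ge x/2$ for $x\le1$; alternatively one can use concavity together with $\log 2\ge 1/2$. Therefore $\log(1+1/q)\ge 1/(2q)$, and so $h_i-2<2q\log(1/b_i)$, which is the claimed bound.

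This case analysis also covers $h_i=H$. In that case no full bucket among $C_1,\ldots,C_{H-1}$ reaches size $q/b_i$. The argument above used only the full bucket $C_{h_i-1}=C_{H-1}$, which exists because $H\ge2$, so the same inequalities apply.

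The only mild subtlety is making sure that the bucket we use is full, so that its size equals $w_{h_i-1}$ rather than the possibly truncated last size. This holds because we use index $h_i-1\le H-1$. We also rely on $w_h\ge q(1+1/q)^{h-1}$, which is immediate from the ceiling in its definition. No earlier claim beyond these definitions is needed.
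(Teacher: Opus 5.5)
Your proof is correct and is essentially the paper's argument run in the opposite direction: the paper exhibits the explicit index $k_i=1+\lceil\log(1/b_i)/\log(1+1/q)\rceil$, whose prescribed size already reaches $q/b_i$, and concludes $h_i\le k_i$, handling $k_i\ge H$ separately via $h_i\le H$; you instead use the minimality of $h_i$ to get that the full bucket $C_{h_i-1}$ has size below $q/b_i$. Both rest on the same two facts, $w_h\ge q(1+1/q)^{h-1}$ and $\log(1+1/q)\ge 1/(2q)$, and your version treats the case $h_i=H$ uniformly with the other cases.
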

\begin{proof}
Define the prescribed bucket index
\[
 k_i=1+\left\lceil\frac{\log(1/b_i)}{\log(1+1/q)}\right\rceil.
\]
Its prescribed size satisfies $w_{k_i}\ge q/b_i$. If $k_i<H$, the
cutoff is at most $k_i$; otherwise $h_i\le H\le k_i$. Since
$\log(1+1/q)\ge1/(2q)$, in either case
\[
 h_i\le k_i\le2+\frac{\log(1/b_i)}{\log(1+1/q)}
 \le2+2q\log(1/b_i),
\]
which concludes the proof of the claim.
\end{proof}

For every active agent $i$ and every bundle $P\subseteq M$, define
the proportion of her tail value contained in $P$ by
\[
 \lambda_{iP}=\begin{cases}
 u_i(P\cap T_i)/u_i(T_i),&u_i(T_i)>0,\\
 0,&u_i(T_i)=0.
 \end{cases}
\]
It holds  that
$0\le\lambda_{iP}\le1$ and
$\lambda_{iP}u_i(T_i)=u_i(P\cap T_i)$, including when the tail has
value zero.

For every index $j\in\{2,\ldots,H\}$, active agent $i$, and
bundle $P\subseteq M$, define the \textit{request}  by
\[
 d_{j,iP}=\begin{cases}
 |P\cap C_j|,&j\le h_i,\\
 |C_j|\lambda_{iP},&j>h_i.
 \end{cases}
\]

A \emph{configuration} for agent $i$ is a bundle $P\subseteq M$
with $u_i(P)\ge1$ and $P\cap C_1=\varnothing$.

\begin{claim}[Expected requests from marginal bounds]
\label{cl:request-means}
Fix an active agent $i$, a number $a_i\ge0$, and a distribution $Q_i$
on bundles $P\subseteq M$ such that
$\Pr_{P\sim Q_i}(g\in P)\le a_i$ for every good $g$.
Then
\[
 \E_{P\sim Q_i}[\lambda_{iP}]\le a_i,\qquad
 \E_{P\sim Q_i}[d_{j,iP}]\le a_i|C_j|
 \quad(2\le j\le H).
\]
\end{claim}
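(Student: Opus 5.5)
The plan is to use linearity of expectation, writing both $\lambda_{iP}$ and $d_{j,iP}$ as nonnegative linear combinations of the indicators $\mathbf 1[g\in P]$. Each indicator has expectation at most $a_i$ under $Q_i$ by hypothesis.

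\emph{The bound on $\lambda_{iP}$.} If $u_i(T_i)=0$, then $\lambda_{iP}=0$ for every $P$, and the bound holds trivially. Otherwise,
\[
 \E_{P\sim Q_i}[\lambda_{iP}]
 =\frac{1}{u_i(T_i)}\sum_{g\in T_i}u_i(g)\Pr_{P\sim Q_i}(g\in P)
 \le\frac{a_i\,u_i(T_i)}{u_i(T_i)}=a_i ,
\]
where the inequality uses $u_i(g)\ge0$ together with the marginal bound.

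\emph{The bound on $d_{j,iP}$.} Fix $j\in\{2,\ldots,H\}$ and split according to the cutoff. Note that $j$ and $i$ are fixed, so the case is determined independently of $P$.
\begin{itemize}
\item If $j\le h_i$, then $d_{j,iP}=|P\cap C_j|=\sum_{g\in C_j}\mathbf 1[g\in P]$. Its expectation is $\sum_{g\in C_j}\Pr(g\in P)\le a_i|C_j|$.
\item If $j>h_i$, then $d_{j,iP}=|C_j|\lambda_{iP}$. Here $|C_j|$ is a constant, so the bound follows directly from the first part.
\end{itemize}

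There is no real obstacle. The only points needing care are the degenerate case $u_i(T_i)=0$ and the observation that the case distinction depends only on $i$ and $j$, not on the random bundle $P$.
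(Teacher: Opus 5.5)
Your proof is correct and follows the paper's argument essentially step for step. You use linearity of expectation for the tail proportion, handling the degenerate case $u_i(T_i)=0$ separately, sum the marginals over $C_j$ for head indices $j\le h_i$, and reduce the tail indices $j>h_i$ to the bound on $\E_{P\sim Q_i}[\lambda_{iP}]$.
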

\begin{proof}
If $u_i(T_i)>0$, the expected tail proportion satisfies
\[
 \E_{P\sim Q_i}[\lambda_{iP}]
 =\frac{\sum_{g\in T_i}u_i(g)\Pr_{P\sim Q_i}(g\in P)}{u_i(T_i)}
 \le a_i.
\]
If $u_i(T_i)=0$, the proportion is identically zero. For every index $j\in \{2,\ldots,h_i\}$, the expected request is
\[
 \E_{P\sim Q_i}[d_{j,iP}]
 =\sum_{g\in C_j}\Pr_{P\sim Q_i}(g\in P)\le a_i|C_j|.
\]
For every tail index  $j>h_i$, the expected request is
\[
 \E_{P\sim Q_i}[d_{j,iP}]
 =|C_j|\E_{P\sim Q_i}[\lambda_{iP}]\le a_i|C_j|,
\]
which concludes the proof of the claim.
\end{proof}

\subsection{Realizing Selected Configurations}

\begin{lemma}
\label{lem:realize}
Let $I\subseteq N_+$ be a subset of agents, and let $\{P_i\}_{i \in I}$ be a collection of configurations  satisfying:
\begin{equation}\label{eq:realize-capacity}
 \sum_{i\in I}d_{j,iP_i}\le|S_{j-1}|\qquad(2\le j\le H).
\end{equation}
There is a randomized partial allocation $(B_i)_{i\in I}$
using only items $S\cup E$, such that
\[
 u_i(B_i)\ge u_i(P_i)\quad(i\in I)
\]
in every outcome. 
For every agent $i\in I$, her assignment probabilities are
\[
\Pr(g\in B_i)=
\begin{cases}
 \dfrac{d_{j,iP_i}}{|S_{j-1}|},
 & g\in S_{j-1},\quad 2\le j\le H,\\[6pt]
 \dfrac{1}{|E_{h_i}|},
 & g\in E_{h_i},\quad 1\le h_i<H,\\[6pt]
 0,
 & \text{otherwise}.
\end{cases}
\]
All goods in $R$, $S_H$, and $E_H$ remain unused.
\end{lemma}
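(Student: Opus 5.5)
The plan is to write down a fractional allocation with exactly the prescribed marginals, check that it is feasible, and round it with Faithful Rounding (Lemma~\ref{lem:round}). The delicate part is the choice of valuation to feed into that lemma, because the rounding loss must be paid for by the buffer goods. For every $i\in I$ set $y_{ig}=d_{j,iP_i}/|S_{j-1}|$ for $g\in S_{j-1}$ ($2\le j\le H$), set $y_{ig}=1/|E_{h_i}|$ for $g\in E_{h_i}$ when $h_i<H$, and set $y_{ig}=0$ otherwise. In particular $y_{ig}=0$ on $R$, $S_H$, $E_H$ and $C_1\setminus S_1$, except on the buffers.

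\textbf{Feasibility.} The sets $S_1,\dots,S_{H-1}$ and $E_1,\dots,E_{H-1}$ are pairwise disjoint, so feasibility splits into two checks.
\begin{itemize}
\item On $S_{j-1}$, the total load is $\sum_i d_{j,iP_i}/|S_{j-1}|\le1$ by~\eqref{eq:realize-capacity}.
\item On $E_h$, at most $|E_h|$ agents have $h_i=h$ by Claim~\ref{cl:buffers}, each taking $1/|E_h|$, so the load is at most one.
\end{itemize}
Lemma~\ref{lem:round} then yields a randomized allocation with the stated assignment probabilities. Its support lies inside $S\cup E$, so $R$, $S_H$ and $E_H$ stay unused.

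\textbf{Valuation for the rounding.} I would not round with $u_i$ itself. Instead I would round with a pointwise smaller valuation $u'_i\le u_i$ that is constant on each level: $u'_i(g)=\max_{C_j}u_i$ for $g\in S_{j-1}$, and $u'_i(g)=\max_{C_{h_i+1}}u_i$ for $g\in E_{h_i}$. Since the goods are ordered and $S_{j-1}\subseteq C_{j-1}$ precedes $C_j$, we get $u'_i\le u_i$. Hence it suffices to prove $u'_i(B_i)\ge u_i(P_i)$.

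\textbf{Fractional value.} The $u'_i$-value of level $j$ is exactly $d_{j,iP_i}\max_{C_j}u_i$.
\begin{itemize}
\item For $2\le j\le h_i$ this is at least $u_i(P_i\cap C_j)$.
\item For $j>h_i$ it equals $\lambda_{iP_i}|C_j|\max_{C_j}u_i\ge\lambda_{iP_i}u_i(C_j)$. Summing over the tail gives $\lambda_{iP_i}u_i(T_i)=u_i(P_i\cap T_i)$.
\end{itemize}
Because $P_i\cap C_1=\varnothing$, the total non-buffer fractional value is at least $u_i(P_i)$. The buffer contributes a further $\max_{C_{h_i+1}}u_i$, which is one full unit of the level-$(h_i+1)$ value.

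\textbf{Main obstacle.} The remaining task is to show that this buffer unit absorbs the rounding loss $m_i=\max u'_i$ over the fractional support. The difficulty is that the support also meets the high-value head levels $S_1,\dots,S_{h_i-1}$. There the requests are the integers $|P_i\cap C_j|$, but the fractions $y_{ig}$ are not integral, so a single rounding could lose one head-level good.

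My first attempt would be to use the freedom in Lemma~\ref{lem:round} more carefully. I would either apply it only to the tail levels $j>h_i$ together with the buffer, where $m_i\le\max_{C_{h_i+1}}u_i$ is exactly covered by the buffer. The head levels would then be realized by a separate rounding of the bipartite "agent--slot" structure: each request $|P_i\cap C_j|$ is an integer, so a dependent rounding of the doubly substochastic matrix that is exact per agent, for instance Birkhoff--von Neumann, gives each agent exactly $d_{j,iP_i}$ goods of $S_{j-1}$ in every outcome with the right marginals, and each such good is worth at least $\max_{C_j}u_i$.

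Running the two roundings on disjoint good sets would then give $u_i(B_i)\ge u_i(P_i\cap\text{head})+u_i(P_i\cap T_i)=u_i(P_i)$ in every outcome, with the prescribed marginals. Whether this split is fully consistent with the exact probabilities is the step I would verify most carefully, since the head and tail requests of different agents share the same level $S_{j-1}$.
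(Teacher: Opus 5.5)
Your preliminary computations are correct: $y$ is feasible, the comparison valuation satisfies $u'_i\le u_i$, and the level-by-level fractional values are right. You also correctly see that a single black-box application of Lemma~\ref{lem:round} fails, because the loss bound would have to cover a head-level good that the buffer cannot pay for. Your fallback, an exact head assignment plus faithful rounding of the tail levels and buffer, is the same division of labor the paper uses. However, the step you leave open is the substance of the lemma, and as written it does not go through. Your tail fractions $y_{ig}=|C_j|\lambda_{iP_i}/|S_{j-1}|$ are spread over all of $S_{j-1}$, which are the same goods that agents with $h_k\ge j$ use for their head requests. Running the head rounding and the tail rounding independently therefore produces collisions. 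Fixing in advance a head part and a tail part of $S_{j-1}$ avoids collisions but destroys the uniform marginals $d_{j,iP_i}/|S_{j-1}|$ that the statement requires.

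The missing idea is to make the split random and rescale the tail fractions conditionally. First assign all head requests at level $j$ by a uniformly random injection into $S_{j-1}$. Their number $n_j=\sum_{i\in I:\,j\le h_i}|P_i\cap C_j|$ is fixed by the configurations. So the leftover set $U_j$ always has the deterministic size $r_j=|S_{j-1}|-n_j$, and each good of $S_{j-1}$ lies in $U_j$ with probability exactly $r_j/|S_{j-1}|$. Subtracting the head requests from \eqref{eq:realize-capacity} gives $|C_j|\sum_{i\in I:\,h_i<j}\lambda_{iP_i}\le r_j$, so the conditional tail fractions $|C_j|\lambda_{iP_i}/r_j$ on $U_j$ are feasible. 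Because $U_j\subseteq C_{j-1}$ precedes $C_j$, your $u'_i$-computation still shows that level $j$ contributes at least $\lambda_{iP_i}u_i(C_j)$. The buffer still pays for the rounding loss over the tail support. Hence Lemma~\ref{lem:round}, applied conditionally on the head assignment, gives $u_i(B_i)\ge u_i(P_i)$ in every outcome. Unconditionally, $\Pr(g\in B_i)=\frac{r_j}{|S_{j-1}|}\cdot\frac{|C_j|\lambda_{iP_i}}{r_j}=d_{j,iP_i}/|S_{j-1}|$. If $r_j=0$, the residual inequality forces $\lambda_{iP_i}=0$ for every agent with $h_i<j$. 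As an alternative, your single fractional allocation $y$ can be rounded in one step if you open up the slot-based bipartite-matching argument behind faithful rounding. Each head total $|P_i\cap C_j|$ is an integer and the buffer total is one, so the unit-mass slots align with the levels; every head slot then receives a good of $S_{j-1}$, and the buffer slot pays for the tail's loss. Lemma~\ref{lem:round} used as a black box does not give this.
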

\begin{proof}
For every selected agent $i\in I$, put $\lambda_i=\lambda_{iP_i}$.

\paragraph{Head assignments.}
For every index $j\in\{2,\ldots,H\}$, create one head request for
agent $i$ for each good in $P_i\cap C_j$, whenever $i\in I$ and
$j\le h_i$. Define the number of these requests by
\[
 n_j=\sum_{i\in I:j\le h_i}|P_i\cap C_j|.
\]
Capacity gives $n_j\le|S_{j-1}|$. Assign the requests by a uniformly
random injection into $S_{j-1}$, that is, choose uniformly among all
assignments of distinct supply goods to the requests.
Let $B_i^{\rm head}$ be the union
of agent $i$'s replacements. Every replacement precedes its requested
good, and $P_i$ avoids $C_1$, so
\begin{equation}\label{eq:realize-head}
 u_i(B_i^{\rm head})\ge\sum_{j=2}^{h_i}u_i(P_i\cap C_j)
 =u_i(P_i\setminus T_i).
\end{equation}

\paragraph{Unused supplies and fractional tails.}
For every index $j\in\{2,\ldots,H\}$, define its unused goods and
their number, and then their union, by
\[
 U_j=S_{j-1}\setminus\bigcup_{i\in I}B_i^{\rm head},\qquad
 r_j=|U_j|=|S_{j-1}|-n_j,\qquad U=\bigcup_{j=2}^H U_j.
\]
Subtracting the head requests from \eqref{eq:realize-capacity} gives
\begin{equation}\label{eq:realize-residual}
 |C_j|\sum_{i\in I:h_i<j}\lambda_i\le r_j.
\end{equation}
For every selected agent $i$ and good $g\in U_j$, define its
fractional tail share by
\[
 y_{ig}=\begin{cases}|C_j|\lambda_i/r_j,&h_i<j,\\
 0,&h_i\ge j.
 \end{cases}
\]
The denominator is positive whenever such a good $g$ exists.
For every good $g\in U_j$, Equation~\eqref{eq:realize-residual} gives
\[
 \sum_{i\in I}y_{ig}
 =\frac{|C_j|}{r_j}\sum_{i\in I:h_i<j}\lambda_i\le1.
\]

Fix a selected agent $i$ and an index $j>h_i$ with $r_j>0$.
Every good in $U_j\subseteq C_{j-1}$ precedes every good in $C_j$,
so
\[
 \sum_{g\in U_j}y_{ig}u_i(g)
 =\frac{|C_j|\lambda_i}{r_j}u_i(U_j)
 \ge\frac{|C_j|\lambda_i}{r_j}
        \frac{r_j u_i(C_j)}{|C_j|}
 =\lambda_i u_i(C_j).
\]
If $r_j=0$, \eqref{eq:realize-residual} forces $\lambda_i=0$ for
every $i\in I$ with $h_i<j$, so this bound also holds with both
sides zero. Summing gives
\begin{equation}\label{eq:realize-tail}
 \sum_{g\in U}y_{ig}u_i(g)\ge\lambda_i u_i(T_i)=u_i(P_i\cap T_i).
\end{equation}

\paragraph{Buffers and rounding.}
For every $h\in\{1,\ldots,H-1\}$, define
$I_h=\{i\in I:h_i=h\}$. By Claim~\ref{cl:buffers}, we can choose
a uniformly random injection of $I_h$ into $E_h$. Write $e_i$ for
the buffer item given to an agent with $h_i<H$.
For each such agent, every good in her positive fractional tail support (defined by $y_{ig}$)
lies in $S_{h_i}$ or a later supply. Since $E_{h_i}$ consists of the
first nonreserve goods of that bucket, her buffer precedes that support:
\[
 y_{ig}>0\quad\Longrightarrow\quad u_i(g)\le u_i(e_i).
\]
Apply Lemma~\ref{lem:round} to the fractional allocation $y$, using
loss bound $u_i(e_i)$ when $h_i<H$ and zero when $h_i=H$ (where in this case, $y_{ig}=0$ for every $g\in U$).
Condition on the head and buffer assignments when applying that lemma.
Let $B_i^{\rm tail}$ be its rounded bundles. They preserve every
conditional fraction $y_{ig}$. Define
\[
 B_i=\begin{cases}
 B_i^{\rm head}\cup B_i^{\rm tail}\cup\{e_i\},&i\in I,\ h_i<H,\\
 B_i^{\rm head},&i\in I,\ h_i=H,\\
 \varnothing,&i\notin I.
 \end{cases}
\]
For $i\in I$ with $h_i<H$, the rounding bound and Equations~
\eqref{eq:realize-head} and \eqref{eq:realize-tail} give
\[
 u_i(B_i)\ge u_i(P_i\setminus T_i)
             +u_i(P_i\cap T_i)-u_i(e_i)+u_i(e_i)=u_i(P_i).
\]
If $h_i=H$, the tail shares vanish and the head alone gives this bound.
All bundles are disjoint: heads use distinct goods, tails use unused
ordinary goods, and buffers use a separate supply.

\paragraph{Assignment probabilities.}
Fix $i\in I$ and $g\in S_{j-1}$ for an index $2\le j\le H$.
If $j\le h_i$, uniform head assignment gives
\[
 \Pr(g\in B_i)=\frac{|P_i\cap C_j|}{|S_{j-1}|}
 =\frac{d_{j,iP_i}}{|S_{j-1}|}.
\]
If $h_i<j$ and $r_j>0$, uniform head assignment leaves $g$ unused
with probability $r_j/|S_{j-1}|$. Conditional on being unused, its
fraction and its rounding probability are $|C_j|\lambda_i/r_j$.
Therefore
\[
 \Pr(g\in B_i)=\frac{r_j}{|S_{j-1}|}\frac{|C_j|\lambda_i}{r_j}
 =\frac{d_{j,iP_i}}{|S_{j-1}|}.
\]
If $r_j=0$, the tail request is zero by
\eqref{eq:realize-residual}. The buffer probabilities follow from
their uniform injections. No step uses $R$, $S_H$, or $E_H$.
\end{proof}

\subsection{Joint Sampling with Bounded Error}
Let $I$ be a finite nonempty set of agents. For each $i\in I$, let
$\mathcal P_i$ be a finite nonempty set of choices and let $Q_i$ be a
probability distribution on $\mathcal P_i$.
Fix a positive integer $s$ and a real number $\xi\ge0$.
For every $i\in I$ and $r\in\{1,\ldots,s\}$, let
$f_{r,i}:\mathcal P_i\to\mathbb R$ be a function.
For a tuple of choices $(P_i)_{i\in I}$, the $r$th sum that we wish
to control is $\sum_{i\in I}f_{r,i}(P_i)$.
The values $f_{r,i}(P)$ may be positive or negative. Assume that
\[
 \sum_{r=1}^s |f_{r,i}(P)|\le\xi
 \qquad(i\in I,\ P\in\mathcal P_i).
\]

\begin{lemma}[Joint sampling with bounded error]
\label{lem:doerr-distribution-preserving}
There is a joint distribution of choices $(P_i)_{i\in I}$ such that
\[
 \Pr(P_i=P)=Q_i(P)
 \qquad(i\in I,\ P\in\mathcal P_i),
\]
and, in every outcome,
\[
 \left|\sum_{i\in I}f_{r,i}(P_i)
       -\sum_{i\in I}\mathbb E_{P\sim Q_i}[f_{r,i}(P)]\right|
 \le2\xi
 \qquad(1\le r\le s).
\]
The choices may be dependent.
\end{lemma}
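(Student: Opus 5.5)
The plan is an iterative randomized rounding in the style of Beck--Fiala, driven by a martingale so that the marginals $Q_i$ are preserved exactly.

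\textbf{Setup.} Work with a fractional vector $x=(x_{iP})_{i\in I,P\in\mathcal P_i}$, initialized to $x_{iP}=Q_i(P)$. Call a coordinate \emph{floating} if $0<x_{iP}<1$. The constraints are the agent rows $\sum_{P}x_{iP}=1$ and the error rows
\[
 \sum_{i,P}f_{r,i}(P)x_{iP}=\sum_{i}\E_{P\sim Q_i}[f_{r,i}(P)].
\]
We maintain all agent rows throughout. We maintain error row $r$ only while it is \emph{active}, meaning
\[
 \sum_{(i,P)\text{ floating}}|f_{r,i}(P)|>2\xi .
\]
Once a row becomes inactive, it is dropped permanently.

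\textbf{Step.} While some coordinate floats, restrict to the floating coordinates and impose homogeneous versions of the current constraints, with the non-floating coordinates frozen. Let $F$ be the number of floating coordinates. Since the integral coordinates of a row sum to an integer, any agent with a floating coordinate has at least two, so the number $a$ of nontrivial agent rows satisfies $a\le F/2$. Each floating coordinate contributes $\sum_r|f_{r,i}(P)|\le\xi$ to the total activity mass, so the total is at most $\xi F$. Each active row carries more than $2\xi$ of this mass, hence there are fewer than $F/2$ active rows. The number of constraints is therefore less than $F$, and there is a nonzero direction $z$ in the kernel.

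Let $\alpha,\beta>0$ be maximal such that $x+\alpha z$ and $x-\beta z$ remain in $[0,1]$. Move to $x+\alpha z$ with probability $\beta/(\alpha+\beta)$ and to $x-\beta z$ otherwise. This is a martingale step, and it makes at least one more coordinate integral. The process therefore terminates after at most $\sum_i|\mathcal P_i|$ steps, and the resulting distribution has finite support.

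\textbf{Conclusion.} At termination each $x_{iP}\in\{0,1\}$ with $\sum_P x_{iP}=1$, which defines a choice $P_i$. By the optional stopping argument for a bounded martingale over finitely many steps, $\Pr(P_i=P)=\E[x^{\rm final}_{iP}]=Q_i(P)$.

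For the error bound, fix a row $r$. If it is never dropped, it holds exactly at the end. Otherwise, at the moment of dropping it holds exactly, and afterwards only the coordinates that were floating at that time can change, each by at most $1$ in absolute value. The subsequent deviation is therefore at most
\[
 \sum_{\text{floating}}|f_{r,i}(P)|\le2\xi .
\]

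\textbf{Main obstacle.} The delicate point is the counting that guarantees a nontrivial kernel, namely $a+\#\{\text{active rows}\}<F$. It uses two facts together: the threshold $2\xi$, and the observation that fractional agents carry at least two floating coordinates each. I expect this to close as stated, with a strict inequality coming from the strict threshold. If an edge case arises, such as no active rows when $F=2a$, the agent rows alone still leave a kernel because each agent row has at least two free coordinates, so $F-a\ge a\ge1$ when $a\ge1$.
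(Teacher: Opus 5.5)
Your proof is correct and is essentially the paper's argument. The paper phrases it as an induction on the number $k$ of fractional entries: it splits $x=\theta x^++(1-\theta)x^-$ with $\theta=t_-/(t_++t_-)$ and mixes the recursively obtained joint distributions, which is exactly your martingale step unrolled, with the same kernel count (at most $k/2$ nontrivial agent rows and fewer than $k/2$ rows whose fractional mass exceeds $2\xi$). Your rule of dropping inactive rows permanently coincides with the paper's recomputation of $J$ at each level, since a row's fractional mass can only decrease, and the error bound for dropped rows is argued the same way.
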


The proof constructs the joint distribution by induction on the number
of probabilities strictly between zero and one. At each step, we express
the current probabilities as a convex combination of two collections
with fewer fractional probabilities, apply induction to each, and mix
the resulting joint distributions. We preserve the expected sums that
could otherwise change by more than $2\xi$; all remaining sums satisfy
the error bound directly.

\begin{proof}
If $\xi=0$, every $f_{r,i}(P)$ is zero, so independent choices suffice.
Assume $\xi>0$. Write $x_{iP}=Q_i(P)$ and let
\[
 F=\{(i,P):i\in I,\ P\in\mathcal P_i,\ 0<x_{iP}<1\},
 \qquad k=|F|.
\]
We prove the claim for every collection of prescribed distributions by
induction on $k$. If $k=0$, each agent has a unique choice of probability
one. Choosing these choices deterministically gives zero error in every
sum. Suppose henceforth that $k>0$.

\paragraph{Splitting the probabilities.}
Define the set of indices whose expected sums we will preserve by
\[
 J=\left\{r\in\{1,\ldots,s\}:
       \sum_{(i,P)\in F}|f_{r,i}(P)|>2\xi\right\}.
\]
We seek real numbers $d_{iP}$, not all zero, with $d_{iP}=0$ outside
$F$, satisfying the homogeneous linear equations
\begin{align}
 \sum_{P\in\mathcal P_i}d_{iP}&=0
 &&(i\in I),
 \label{eq:joint-sampling-agent-sums}\\
 \sum_{(i,P)\in F}f_{r,i}(P)d_{iP}&=0
 &&(r\in J).
 \label{eq:joint-sampling-preserved-sums}
\end{align}
Equation~\eqref{eq:joint-sampling-agent-sums} ensures that replacing
$x$ by $x+td$ preserves each agent's total probability of one.
Equation~\eqref{eq:joint-sampling-preserved-sums} ensures that
\[
 \sum_{i\in I}\sum_{P\in\mathcal P_i}
       (x_{iP}+td_{iP})f_{r,i}(P)
 =\sum_{i\in I}\sum_{P\in\mathcal P_i}x_{iP}f_{r,i}(P)
 \qquad(r\in J).
\]
Thus, whenever $x+td$ has entries in $[0,1]$, it prescribes probability
distributions with the same expected sums for all $r\in J$.

To see that a nonzero solution $d$ exists, first observe that
\[
 \sum_{r=1}^s\sum_{(i,P)\in F}|f_{r,i}(P)|
 =\sum_{(i,P)\in F}\sum_{r=1}^s|f_{r,i}(P)|
 \le k\xi.
\]
The inequality follows by applying the hypothesis to each of the $k$
pairs in $F$. By the definition of $J$, it implies $|J|<k/2$.
Moreover, every agent with a fractional probability has at least two such probabilities:
a single number strictly between zero and one cannot sum to one with
numbers in $\{0,1\}$. Consequently, at most $k/2$ agents have fractional
probabilities. Only these agents give nontrivial equations in
\eqref{eq:joint-sampling-agent-sums}. Together with
\eqref{eq:joint-sampling-preserved-sums}, there are therefore fewer
than $k$ nontrivial homogeneous equations in the $k$ unknowns
$(d_{iP})_{(i,P)\in F}$. Hence a nonzero solution exists.

Let $t_+>0$ be the largest number for which all entries of $x+t_+d$
belong to $[0,1]$, and let $t_->0$ be the largest number for which all
entries of $x-t_-d$ belong to $[0,1]$. Both numbers are positive because
every entry that can change is initially strictly between zero and one,
and both are finite because $d\ne0$. Set
\[
 x^+=x+t_+d,\qquad x^-=x-t_-d,
 \qquad \theta=\frac{t_-}{t_++t_-}.
\]
The agent equations ensure that $x^+$ and $x^-$ each prescribe a
probability distribution for every agent. At each endpoint, at least
one previously fractional entry reaches zero or one; otherwise the
corresponding step could be increased. Entries outside $F$ do not
change. Thus both endpoints have fewer than $k$ fractional entries, and
\[
 x=\theta x^++(1-\theta)x^-,
 \qquad 0<\theta<1.
\]

\paragraph{Constructing the joint distribution.}
Apply the induction hypothesis to $x^+$ and $x^-$, obtaining joint
distributions $\mathcal D^+$ and $\mathcal D^-$ with those marginals.
Let $\mathcal D$ sample from $\mathcal D^+$ with probability $\theta$
and from $\mathcal D^-$ with probability $1-\theta$.
For every $i$ and $P$,
\[
 \Pr_{\mathcal D}(P_i=P)
 =\theta x^+_{iP}+(1-\theta)x^-_{iP}
 =x_{iP}=Q_i(P).
\]
The first equality uses the marginal guarantees of the two distributions,
and the second uses the convex combination above.

\paragraph{Verifying the error bound.}
Fix $r\in\{1,\ldots,s\}$. If $r\in J$, then
\[
 \sum_{i\in I}\sum_{P\in\mathcal P_i}x^\pm_{iP}f_{r,i}(P)
 =\sum_{i\in I}\sum_{P\in\mathcal P_i}x_{iP}f_{r,i}(P).
\]
This equality follows from
\eqref{eq:joint-sampling-preserved-sums} and holds separately for
$x^+$ and $x^-$.
Induction bounds the error in every outcome of each component
distribution by $2\xi$ relative to its own expected sum.
Since both expected sums equal the original one, the same bound holds
in every outcome of $\mathcal D$.

If $r\notin J$, fix an outcome $(P_i)_{i\in I}$ in the support of
$\mathcal D$. For every $(i,P)\notin F$, we have
$\mathbf 1_{\{P_i=P\}}=x_{iP}$: the marginal probability $x_{iP}$ is
zero or one and therefore forces the corresponding indicator.
Consequently,
\begin{align*}
 &\left|\sum_{i\in I}f_{r,i}(P_i)
       -\sum_{i\in I}\sum_{P\in\mathcal P_i}x_{iP}f_{r,i}(P)\right|\\
 &\qquad=
 \left|\sum_{(i,P)\in F}
       \bigl(\mathbf 1_{\{P_i=P\}}-x_{iP}\bigr)f_{r,i}(P)\right|\\
 &\qquad\le\sum_{(i,P)\in F}|f_{r,i}(P)|
 \le2\xi.
\end{align*}
The equality cancels the indicators outside $F$. The first inequality
uses the triangle inequality and
$|\mathbf 1_{\{P_i=P\}}-x_{iP}|\le1$; the second uses $r\notin J$.
Thus every sum satisfies the required bound. Each recursive step
reduces $k$, so the construction terminates after finitely many steps.
\end{proof}

The decomposition uses the same linear-dependence argument as
Doerr's rounding proof~\cite[Theorem~6(b)]{doerr2007roundings}.
The proof above is self-contained and explicitly constructs a joint
distribution with the prescribed marginals.
\section{An APS Allocation with Slack}
\label{sec:aps}

We prove Theorem~\ref{thm:aps}. Fix an instance satisfying its
assumptions, and set
\[
 q=\frac8\varepsilon,
 \qquad c=c_\varepsilon=\frac1{32q^2\log q}.
\]
Thus $q\ge16$, $W\le1-8/q$, and $b_i\le c$ for every agent
$i\in N$. In particular,
\begin{equation}\label{eq:aps-parameters}
 c\le\frac1{q^2},\qquad qc\le\frac1q.
\end{equation}
If $N_+=\varnothing$, any allocation proves the theorem. Otherwise,
use the normalized valuations, certificates, and bucket construction
from Section~\ref{sec:ingredients} with this value of $q$.
We first select configurations for most active agents, and then use
the reserve to serve the remaining active agents.

\subsection{A Feasible System of Configuration Weights}

For every active agent $i\in N_+$, let $\widetilde D_i$ be her
certificate distribution $D_i$ conditioned on avoiding the first
bucket $C_1$. Let $\mathcal P_i$ denote its support: the configurations
with positive probability under $\widetilde D_i$. This conditioning
is well-defined because $|C_1|=q$ and the certificate has item
marginals at most $b_i$, so
\[
 \Pr_{P\sim D_i}(P\cap C_1\ne\varnothing)
 \le\sum_{g\in C_1}\Pr_{P\sim D_i}(g\in P)
 \le qb_i\le qc<1.
\]
Every configuration $P\in\mathcal P_i$ satisfies $u_i(P)\ge1$ and
$P\cap C_1=\varnothing$. For every good $g\in M$, its conditional
inclusion probability satisfies
\begin{equation}\label{eq:aps-conditional}
 \Pr_{P\sim\widetilde D_i}(g\in P)
 \le\frac{b_i}{1-qb_i}\le\frac{b_i}{1-qc}.
\end{equation}

For every agent $i\in N_+$ and configuration $P\in\mathcal P_i$,
introduce a nonnegative weight $x_{iP}$. Using the request
coefficients $d_{j,iP}$ defined in Section~\ref{sec:ingredients},
consider the system
\begin{align}
 \sum_{P\in\mathcal P_i}x_{iP}&=1
 &&(i\in N_+),\label{eq:aps-normalization}\\
 \sum_{i\in N_+}\sum_{P\in\mathcal P_i}d_{j,iP}x_{iP}
 &\le |S_{j-1}|
 &&(2\le j\le H).\label{eq:aps-capacity}
\end{align}
The first constraints give each agent total configuration weight
one. The constraint indexed by $j$ bounds the total request from
the ordinary goods in $S_{j-1}$; we call it capacity constraint $j$.

\begin{claim}[Feasibility]\label{cl:aps-feasible}
The weights $x_{iP}=\widetilde D_i(P)$ satisfy
\eqref{eq:aps-normalization}--\eqref{eq:aps-capacity}.
\end{claim}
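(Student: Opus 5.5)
Normalization \eqref{eq:aps-normalization} holds because $\widetilde D_i$ is a probability distribution on $\mathcal P_i$. For capacity constraint $j$, the plan is to apply Claim~\ref{cl:request-means} to $Q_i=\widetilde D_i$ with marginal bound $a_i=b_i/(1-qc)$ from \eqref{eq:aps-conditional}. This bounds the left-hand side of \eqref{eq:aps-capacity} by
\[
 \sum_{i\in N_+}\E_{P\sim\widetilde D_i}[d_{j,iP}]
 \le \frac{|C_j|}{1-qc}\sum_{i\in N_+}b_i
 \le \frac{W}{1-qc}\,|C_j|.
\]
It then remains to compare $|C_j|$ with $|S_{j-1}|$.

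For every $2\le j\le H$, the bucket $C_{j-1}$ is full, since $j-1<H$. Claim~\ref{cl:geometry} therefore gives $|S_{j-1}|\ge(1-3/q)|C_{j-1}|$ and $|C_j|\le(1+2/q)|C_{j-1}|$. Hence it suffices to verify
\[
 \frac{W(1+2/q)}{1-qc}\le 1-\frac3q .
\]
Using $W\le1-8/q$ and $qc\le1/q$ from \eqref{eq:aps-parameters}, the left side is at most $(1-8/q)(1+2/q)/(1-1/q)$. Multiplying out, we need $(1-8/q)(1+2/q)\le(1-3/q)(1-1/q)$, i.e.
\[
 1-\frac6q-\frac{16}{q^2}\le 1-\frac4q+\frac3{q^2},
\]
which clearly holds for $q>0$.

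The only point needing care is the bookkeeping of which bucket each constraint refers to. Constraint $j$ draws on $S_{j-1}$, so we need the growth bound $|C_j|\le(1+2/q)|C_{j-1}|$ together with fullness of $C_{j-1}$. Once that is in place, the argument is a routine inequality in $q$.
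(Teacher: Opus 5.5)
Your proof is correct and follows the paper's argument: normalization holds because each $\widetilde D_i$ is a distribution, Claim~\ref{cl:request-means} with $a_i=b_i/(1-qc)$ bounds the expected request, and Claim~\ref{cl:geometry} compares $|C_j|$ with $|S_{j-1}|$. The only cosmetic difference is the arithmetic: the paper first shows $W/(1-qc)\le 1-5/q$ and then uses $(1-5/q)(1+2/q)\le 1-3/q$, whereas you check the single combined inequality $(1-8/q)(1+2/q)\le(1-3/q)(1-1/q)$ directly.
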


\begin{proof}
Normalization follows because every $\widetilde D_i$ is a
probability distribution. By Claim~\ref{cl:request-means} and
\eqref{eq:aps-conditional}, the expected request in each capacity
row $j\in\{2,\ldots,H\}$ satisfies
\[
 \sum_{i\in N_+}\mathbb E_{P\sim\widetilde D_i}[d_{j,iP}]
 \le\frac{\sum_{i\in N_+}b_i}{1-qc}|C_j|
 \le\frac{W}{1-qc}|C_j|.
\]
The coefficient of $|C_j|$ is at most
\[
 \frac{W}{1-qc}
 \le\frac{1-8/q}{1-1/q}
 =1-\frac7{q-1}\le1-\frac5q.
\]
Claim~\ref{cl:geometry} now bounds the expected request by the
available ordinary supply:
\[
 \sum_{i\in N_+}\mathbb E_{P\sim\widetilde D_i}[d_{j,iP}]
 \le(1-5/q)|C_j|
 \le(1-5/q)(1+2/q)|C_{j-1}|
 \le(1-3/q)|C_{j-1}|\le|S_{j-1}|,
\]
which concludes the proof of the claim.
\end{proof}

\subsection{Total  Fractional Entitlement }

The feasible set is a nonempty bounded set defined by finitely many
linear constraints. Choose a vertex $x=(x_{iP})$ of this set, meaning
a feasible point that is not the midpoint of two distinct feasible
points. Define the integral agents and fractional agents,
respectively, by
\[
 I=\{i\in N_+:x_{iP}=1\text{ for some }P\in\mathcal P_i\},
 \qquad F=N_+\setminus I.
\]
Every integral agent has exactly one positive configuration weight.
Every fractional agent has at least two positive configuration
weights. For every real threshold $z\in(0,c]$, define
\[
 F_z=\{i\in F:b_i\ge z\},
 \qquad
 h(z)=1+\left\lceil\frac{\log(1/z)}{\log(1+1/q)}\right\rceil.
\]
The integer $h(z)$ is chosen so that the prescribed bucket size
at index $h(z)$ is at least $q/z$.

\begin{claim}[Counting fractional agents]\label{cl:aps-count}
For every threshold $z\in(0,c]$, the number of fractional agents
with entitlement at least $z$ satisfies
\[
 |F_z|\le h(z)\le2+2q\log(1/z).
\]
\end{claim}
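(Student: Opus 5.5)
The plan is the standard vertex-counting argument, restricted to the agents of $F_z$ and exploiting that their tail requests are all proportional to one common vector. Suppose for contradiction that $|F_z|>h(z)$. I will build a nonzero direction $d=(d_{iP})$, supported on pairs $(i,P)$ with $i\in F_z$ and $x_{iP}>0$, such that $x\pm td$ is feasible for all small $t>0$. Then $x$ is the midpoint of two distinct feasible points, contradicting the choice of $x$ as a vertex.

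\textbf{Step 1 (cutoffs are early).} For $i\in F_z$ we have $b_i\ge z$. By the definition of $h(z)$ and the computation in Claim~\ref{cl:cutoff}, $w_{h(z)}\ge q/z\ge q/b_i$. If $h(z)<H$, bucket $C_{h(z)}$ is full, so $h_i\le h(z)$ by the definition of the cutoff. Otherwise $h_i\le H\le h(z)$. So in every case $h_i\le h(z)$.

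\textbf{Step 2 (structure of the rows).} For each $i\in F_z$ and each row $j>h(z)$, we have $j>h_i$, so $d_{j,iP}=|C_j|\lambda_{iP}$. Hence, on the rows $j>h(z)$, the column of every variable $x_{iP}$ with $i\in F_z$ equals $\lambda_{iP}$ times the single fixed vector $(|C_j|)_{j>h(z)}$. I therefore impose the following homogeneous equations on $d$:
\begin{itemize}
\item[(a)] $\sum_P d_{iP}=0$ for each $i\in F_z$, which is $|F_z|$ equations;
\item[(b)] $\sum_{i\in F_z,P} d_{j,iP}d_{iP}=0$ for each row $2\le j\le\min\{h(z),H\}$, which is at most $h(z)-1$ equations;
\item[(c)] the single equation $\sum_{i\in F_z,P}\lambda_{iP}d_{iP}=0$.
\end{itemize}
Equation (c) implies that the change in every capacity row $j>h(z)$ is zero. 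Altogether there are at most $|F_z|+h(z)$ equations.

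\textbf{Step 3 (counting and perturbing).} Every fractional agent has at least two positive weights, so there are at least $2|F_z|$ unknowns. If $|F_z|>h(z)$, then $2|F_z|>|F_z|+h(z)$, so a nonzero solution $d$ exists. Now check that $x\pm td$ stays feasible for small $t>0$:
\begin{itemize}
\item every capacity row is unchanged, by (b) and (c);
\item normalization holds for the agents of $F_z$ by (a), and is untouched for all other agents;
\item nonnegativity holds because $d$ is supported on strictly positive entries.
\end{itemize}
Thus $x=\tfrac12(x+td)+\tfrac12(x-td)$ with $x+td\ne x-td$, contradicting that $x$ is a vertex. This gives $|F_z|\le h(z)$.

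\textbf{Step 4 (the second inequality).} Since $\log(1+1/q)\ge 1/(2q)$, we get
\[
 h(z)\le 2+\frac{\log(1/z)}{\log(1+1/q)}\le 2+2q\log(1/z),
\]
exactly as in Claim~\ref{cl:cutoff}.

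The main obstacle is Step 2: seeing that all tail rows collapse into one constraint for the agents of $F_z$. That collapse needs $h_i\le h(z)$ for every $i\in F_z$, which is where the threshold $b_i\ge z$ is used.
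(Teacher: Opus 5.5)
Your proof is correct and follows the paper's vertex-counting argument: the paper also shows $h_i\le h(z)$ for $i\in F_z$, preserves rows $2,\ldots,\min\{h(z),H\}$ one by one, and collapses all later rows into the single $\lambda$-equation. The only cosmetic difference is that the paper fixes two positive-weight configurations per agent with a single perturbation $\delta_i$, so normalization is automatic and the comparison is $|F_z|$ unknowns against at most $h(z)$ equations, whereas you perturb every positive coordinate and impose normalization explicitly, which yields the same count.
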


\begin{proof}
Fix $z\in(0,c]$. Every agent $i\in F_z$ has cutoff
$h_i\le h(z)$. Indeed, if $h(z)<H$, bucket $C_{h(z)}$ is full
and has size at least $q/z\ge q/b_i$; if $h(z)\ge H$, the
inequality follows from $h_i\le H$.

For each agent $i\in F_z$, choose two distinct configurations
$P_i^+,P_i^-\in\mathcal P_i$ with positive weights. Let
$\delta_i$ be a real perturbation that we add to $x_{iP_i^+}$ and
subtract from $x_{iP_i^-}$. This preserves every agent's total
weight. To preserve the capacity rows up to $h(z)$, impose
\[
 \sum_{i\in F_z}
 (d_{j,iP_i^+}-d_{j,iP_i^-})\delta_i=0
 \qquad(2\le j\le\min\{h(z),H\}).
\]
There are at most $h(z)-1$ equations here. In every later row
$j>h(z)$, all these agents use tail requests, so the change in
the row is
\[
 |C_j|\sum_{i\in F_z}
 (\lambda_{iP_i^+}-\lambda_{iP_i^-})\delta_i.
\]
Consequently one further equation,
\[
 \sum_{i\in F_z}
 (\lambda_{iP_i^+}-\lambda_{iP_i^-})\delta_i=0,
\]
preserves all later rows. We may impose it even when no later
rows exist.

If $|F_z|>h(z)$, this homogeneous system has more unknowns than
equations, and therefore has a nonzero solution. Scale that solution
so that
\[
 |\delta_i|\le\tfrac12\min\{x_{iP_i^+},x_{iP_i^-}\}
 \qquad(i\in F_z).
\]
Let $\Delta$ be the vector whose coordinates are
$\Delta_{iP_i^+}=\delta_i$, $\Delta_{iP_i^-}=-\delta_i$, and zero
elsewhere. Both $x+\Delta$ and $x-\Delta$ are distinct feasible
points with midpoint $x$, contradicting the choice of a vertex.
This proves $|F_z|\le h(z)$. Finally,
$\log(1+1/q)\ge1/(2q)$ gives
\[
 h(z)\le2+\frac{\log(1/z)}{\log(1+1/q)}
 \le2+2q\log(1/z),
\]
which concludes the proof of the claim.
\end{proof}

\begin{claim}[Bounding the total entitlement of fractional agents]\label{cl:aps-mass}
The fractional agents have total entitlement at most $1/(2q)$, i.e.,
\[
 \sum_{i\in F}b_i\le\frac1{2q}.
\]
\end{claim}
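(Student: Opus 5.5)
We bound $\sum_{i\in F}b_i$ with a layer-cake argument driven by Claim~\ref{cl:aps-count}. Since every $b_i\le c$, we can write
\[
 \sum_{i\in F}b_i=\sum_{i\in F}\int_0^{c}\mathbf 1[b_i\ge z]\,dz=\int_0^{c}|F_z|\,dz .
\]
Claim~\ref{cl:aps-count} bounds the integrand by $|F_z|\le 2+2q\log(1/z)$ for every $z\in(0,c]$. Integrating this bound termwise, and using $\int_0^c\log(1/z)\,dz=c\log(1/c)+c$, gives
\[
 \sum_{i\in F}b_i\le 2c+2q\bigl(c\log(1/c)+c\bigr)=c\bigl(2+2q+2q\log(1/c)\bigr).
\]

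Next we substitute $c=1/(32q^2\log q)$. This gives
\[
 \log(1/c)=\log 32+2\log q+\log\log q .
\]
Because $q\ge16$, we have $\log 32\le 1.25\log q$ and $\log\log q\le \log q$. Hence $\log(1/c)\le 4.25\log q$.

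The remaining terms are handled the same way. Since $\log q\ge\log 16>2.7$, we get $2+2q\le 2.2q\le q\log q$. Therefore
\[
 2+2q+2q\log(1/c)\le q\log q+8.5\,q\log q\le 10\,q\log q .
\]
Multiplying by $c$ yields
\[
 \sum_{i\in F}b_i\le \frac{10\,q\log q}{32q^2\log q}=\frac{10}{32q}\le\frac1{2q},
\]
which is the claimed bound.

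The argument is short, and the one structural step is the layer-cake identity that turns the threshold-wise counting bound into a mass bound. Everything after that is constant-chasing in the choice of $c$. Two side conditions need checking. First, the identity requires $b_i\le c$ for all $i$, which holds by the hypothesis $b_i\le c$. Second, the improper integral of $\log(1/z)$ near $0$ is finite, so the bound on $|F_z|$, which grows only logarithmically as $z\to0$, is enough.
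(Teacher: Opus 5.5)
Your proof is correct and follows the paper's argument: you use the same layer-cake identity $\sum_{i\in F}b_i=\int_0^c|F_z|\,dz$, integrate the bound $|F_z|\le 2+2q\log(1/z)$ from the counting claim, and then substitute $c=1/(32q^2\log q)$. The only difference is in the final constants: you bound $\log(1/c)\le 4.25\log q$ and absorb $2+2q$ into $q\log q$ to get $10/(32q)$, whereas the paper bounds $\log(1/c)+1\le 6\log q$ and concludes with $16qc\log q=1/(2q)$; both are valid.
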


\begin{proof}
An agent $i\in F$ belongs to $F_z$ precisely when $0<z\le b_i$.
Integrating the bound in Claim~\ref{cl:aps-count} yields
\[
 \sum_{i\in F}b_i
 =\int_0^c|F_z|\,dz
 \le2c+2q\int_0^c\log(1/z)\,dz
 =2c+2qc\bigl(\log(1/c)+1\bigr).
\]
Our choice of $c$ and the inequality $q\ge16$ imply
\[
 \log(1/c)+1=\log32+2\log q+\log\log q+1
 \le6\log q.
\]
Substituting this bound gives
\[
 \sum_{i\in F}b_i
 \le2c+12qc\log q
 \le16qc\log q=\frac1{2q},
\]
which concludes the proof of the claim.
\end{proof}

\subsection{Constructing the Allocation}

For every integral agent $i\in I$, let $P_i\in\mathcal P_i$
denote her unique configuration with $x_{iP_i}=1$. Because all
request coefficients are nonnegative, dropping the fractional
agents' terms from \eqref{eq:aps-capacity} gives
\[
 \sum_{i\in I}d_{j,iP_i}\le|S_{j-1}|
 \qquad(2\le j\le H).
\]
Lemma~\ref{lem:realize} therefore supplies, in any realized outcome, pairwise disjoint bundles in $S\cup E$ giving every
$i\in I$ normalized value at least $u_i(P_i)\ge1$.

\begin{claim}[Serving the fractional agents]\label{cl:aps-reserve}
The reserve items $R$ admit pairwise disjoint bundles giving every
fractional agent $i\in F$ normalized value at least one.
\end{claim}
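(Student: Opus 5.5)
We must show the reserve $R=\{1,1+q,1+2q,\dots\}$ can give each fractional agent $i\in F$ capped value at least one, using disjoint bundles. Partition $R$ into consecutive blocks, process the fractional agents in some order, and give each agent a bundle of consecutive reserve goods that is large enough relative to $1/b_i$. The key comparison is that reserve good $1+kq$ precedes every good in $\{1+kq,\dots,(k+1)q\}$. Hence for any $k_0$ and any $L$,
\[
 u_i(\{1+kq:k_0\le k<k_0+L\})\ge \frac1q\,u_i(\{1+k_0q,\dots,(k_0+L)q\}).
\]
So a block of $L$ consecutive reserve goods starting at index $k_0$ dominates a $1/q$ fraction of a window of $Lq$ consecutive goods.

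We use Claim~\ref{cl:capping} to lower bound such windows. Values $u_i\le 1$ are nonincreasing and $u_i(M)\ge 1/b_i$. If the window starts at position $p$, discarding the first $p-1$ goods loses at most $p-1$ capped value. The window of length $\ell$ starting at $p$ covers at least a $\min\{1,\ell/m\}$-type share of the remaining suffix, but bounding by averaging is weak. Better: a consecutive window of length $\ell$ has value at least $\tfrac{\ell}{\ell'}$ times any later window of length $\ell'\ge\ell$. Simplest is to give the agent \emph{all} reserve goods from her start index onward in an interleaved way. Instead I would choose the allocation as a round-robin over the reserve: order fractional agents, and let agent $i$ take reserve goods in a pattern of period $K$, so her share of the suffix is $1/K$. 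With $K$ agents sharing round-robin, each gets at least $\frac1{K}$ of the reserve suffix after the first $K$ goods, minus one good. That gives value at least $\frac1{qK}(u_i(M)-Kq)-1$. This is too weak when $|F|$ is large, so weights must matter.

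The weighted version follows the pattern of Lemma~\ref{lem:realize}: treat the reserve as a supply where good $1+kq$ serves the $q$ goods after it. Give agent $i$ a fractional share $y_{ig}=2qb_i$ of each reserve good $g$. This is feasible because $\sum_{i\in F}2qb_i\le 1$ by Claim~\ref{cl:aps-mass}. The fractional value is then at least $2qb_i\cdot\frac1q(u_i(M)-q)\ge 2-2qb_i\ge 2-2/q$. Here we compare each reserve good with the block of $q$ goods following it, and the lost first part contributes at most one good per block, hence at most $q$ value in total. Then apply Lemma~\ref{lem:round} with $m_i=1$ (capped values are at most one), obtaining integral disjoint bundles inside $R$ with $u_i(B_i)\ge 2-2/q-1\ge 1$ deterministically; any single outcome suffices.

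The main obstacle is the comparison step. The precise inequality we need is $u_i(R)\ge \frac1q\bigl(u_i(M)-u_i(\{1,\dots,q\})\bigr)\ge \frac1q(u_i(M)-q)$. This uses that each reserve good $1+kq$ dominates the $q$ goods $2+kq,\dots,1+(k+1)q$, and that the blocks shifted by one cover $M\setminus\{1\}$ except possibly the boundary. Combined with $u_i(M)\ge 1/b_i$ and $qb_i\le qc\le 1/q$, this yields the needed constant margin above one.
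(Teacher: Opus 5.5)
Your plan matches the paper's: give each $i\in F$ the fractional share $2qb_i$ of every reserve good, which is feasible by Claim~\ref{cl:aps-mass}, and round with Lemma~\ref{lem:round} using loss bound one. The gap is in your comparison step and the final arithmetic.

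Good $1\in R$ may have capped value one, so rounding can cost a full unit. You therefore need fractional value at least $2$, not merely close to $2$. Your shifted comparison, in which each reserve good dominates the $q$ goods after it, discards value. It yields only $u_i(R)\ge\frac1q\bigl(u_i(M)-q\bigr)$, hence fractional value $2-2qb_i$ and a rounded value of $1-2qb_i$. Your concluding step $2-2/q-1\ge1$ is false.

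Even done carefully, the shifted comparison gives only $q\,u_i(R)\ge u_i(M)-u_i(1)$, which still leaves you at $1-2b_i<1$. You cannot compensate by enlarging the shares: Claim~\ref{cl:aps-mass} only gives $\sum_{i\in F}b_i\le1/(2q)$, so $2qb_i$ is the largest admissible uniform scale. Theorem~\ref{thm:aps} asks for the exact APS, so any shortfall below one is fatal.

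The fix is the unshifted comparison you already wrote in your first display. Reserve good $1+kq$ is at least as valuable as every good of the block $\{1+kq,\ldots,\min\{(k+1)q,m\}\}$, including itself, and these blocks partition $M$. Hence $u_i(M)\le q\,u_i(R)$ with no loss at all. Then $2qb_i\,u_i(R)\ge 2b_i\,u_i(M)\ge 2$ by Claim~\ref{cl:capping}, and rounding leaves $u_i(A_i)\ge1$. With this change your argument is exactly the paper's proof.
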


\begin{proof}
For each reserved rank $r\in R$, the consecutive ranks from $r$
to $\min\{r+q-1,m\}$ form a block of at most $q$ goods.
These blocks partition $M$, and their first ranks are most valuable
for every ordered valuation. For each active agent $i\in N_+$,
the certificate gives $u_i(M)\ge1/b_i$, so
\[
 \frac1{b_i}\le u_i(M)
 =\sum_{r\in R}\sum_{g=r}^{\min\{r+q-1,m\}}u_i(g)
 \le q\sum_{r\in R}u_i(r)=q\,u_i(R).
\]
For every fractional agent $i\in F$ and good $g\in R$, define
the fractional reserve share $z_{ig}=2qb_i$. By
Claim~\ref{cl:aps-mass}, the total fraction assigned of any
reserve good $g\in R$ is at most one:
\[
 \sum_{i\in F}z_{ig}=2q\sum_{i\in F}b_i\le1.
\]
Each fractional agent $i\in F$ receives fractional value at
least two:
\[
 \sum_{g\in R}z_{ig}u_i(g)=2qb_i u_i(R)\ge2.
\]
The normalized singleton values satisfy $u_i(g)\le1$.
Apply Lemma~\ref{lem:round} to this fractional allocation with
loss bound $m_i=1$ for each $i\in F$. In any outcome, the
resulting reserve bundles $A_i\subseteq R$, for $i\in F$,
are pairwise disjoint and satisfy
\[
 u_i(A_i)\ge\sum_{g\in R}z_{ig}u_i(g)-1\ge1,
\]
which concludes the proof of the claim.
\end{proof}

\begin{proof}[Proof of Theorem~\ref{thm:aps}]
Assume for simplicity that $1/\varepsilon$ is an integer.
The bundles for $I$ use $S\cup E$, and the bundles for $F$ use
the disjoint reserve $R$. Together they give normalized value at
least one, and hence original value at least $t_i$, to every
active agent $i\in N_+$. Give each inactive agent an empty bundle
and assign all unallocated goods arbitrarily. The ordering reduction
from Section~\ref{sec:prelim} transfers this guarantee to the original
instance. This proves Theorem~\ref{thm:aps}.
\end{proof}
\section{Best-of-Both-Worlds Randomized Allocations}
\label{sec:bobw}

We first prove a result for instances with total entitlement strictly
below one. We condition each active agent's certificate on configurations
with small total absolute contribution to the sums used in
Lemma~\ref{lem:doerr-distribution-preserving}. We then use that lemma
to select configurations jointly, preserving their distributions while
ensuring that all requests fit within the available supplies.
Lemma~\ref{lem:realize} realizes the selected configurations, and
we assign unused goods to obtain the required expected values.
Our best-of-both-worlds results (Theorems~\ref{thm:bobw} and \ref{thm:bobw2}) are implied  by  omitting a random group of agents or by
reducing all entitlements, respectively.

Fix $0<\delta\le1/2$ with $1/\delta$ an integer, and define
\[
 \kappa_\delta=\frac{\delta^3}{32768\log(8/\delta)}.
\]

\begin{proposition}[Proportionality and APS with entitlement slack]
\label{prop:bobw-slack}
Consider an instance with nonnegative additive valuations and
entitlements satisfying
\[
 0<W=\sum_{i\in N}b_i\le1-\delta,
 \qquad 0<b_i\le\kappa_\delta\quad(i\in N).
\]
There is a randomized complete allocation $A$ such that, for every
$i\in N$,
\[
 \E[v_i(A_i)]\ge\frac{b_i}{W}v_i(M),
 \qquad v_i(A_i)\ge\APS_i(b_i)\quad\text{in every outcome}.
\]
\end{proposition}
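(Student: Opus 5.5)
We follow the roadmap at the start of Section~\ref{sec:bobw}, with $q=4/\delta$ (so $q\ge 8$; if the section's $q\ge16$ is needed, take $q=8/\delta$ and absorb the change into the constant in $\kappa_\delta$). Inactive agents have $t_i=0$. For each active agent $i$ we take the certificate $D_i$ and condition it on two events: avoiding $C_1$, and having small total absolute contribution
\[
 \sum_{j=2}^{H}\frac{d_{j,iP}}{|S_{j-1}|}\;\le\;\xi.
\]
This contribution is the quantity fed into Lemma~\ref{lem:doerr-distribution-preserving}, with one sum per capacity row $j$ and $f_{j,i}(P)=d_{j,iP}/|S_{j-1}|$.

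\textbf{Conditioning.} By Claim~\ref{cl:request-means}, the expected contribution under $D_i$ is at most $b_i\sum_j |C_j|/|S_{j-1}|\le b_i(1+O(1/q))H$. This is useless because $H$ can be huge, so the conditioning must be refined. Row $j$ only matters when $j\le h_i$ (head rows) or in the tail, where every row sees the same $\lambda_{iP}$. Rows $j>h_i$ are therefore governed by the single quantity $\lambda_{iP}$. Head rows number at most $h_i\le 2+2q\log(1/b_i)$ by Claim~\ref{cl:cutoff}, each with mean at most $b_i|C_j|/|S_{j-1}|$. Markov's inequality then shows that conditioning on contribution at most $\xi\approx q^2 b_i\log(1/b_i)$ removes mass at most $1/q$. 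Together with avoiding $C_1$ (mass $qb_i$), the conditioned distribution $\widetilde D_i$ has marginals at most $b_i/(1-2/q)$.

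\textbf{Joint sampling.} Lemma~\ref{lem:doerr-distribution-preserving} needs a common $\xi$ per sum. The worry is that a uniform bound would put the full $\xi$ into every row through tail terms. To avoid this we normalize by row: each agent contributes to row $j$ either a head term (only for $j\le h_i$) or $\lambda_{iP}|C_j|/|S_{j-1}|\le 1+O(1/q)$. Summed over rows this is not bounded by $\xi$, so we rescale so that the rows needing control have slack proportional to $|S_{j-1}|$. Concretely, we apply the lemma with $f_{j,i}(P)=d_{j,iP}/(q\,b_{\max}\log(1/b_{\max})\,|C_j|)$, restricted to rows where $i$ is active in the head, plus one aggregated tail sum per cutoff class. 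This is the step I expect to be the main obstacle: making the joint-sampling error $2\xi$ fit inside the slack in $|S_{j-1}|$ simultaneously for all rows. The slack is $(1-3/q)|C_{j-1}|-(1-\delta)(1+O(1/q))|C_j|\ge |C_j|/q$. It absorbs the error exactly when $2\xi|C_j|\lesssim |C_j|/q$, which is where the $\delta^3/\log$ cap comes from. The expected requests are at most $(W/(1-2/q))|C_j|$ by Claim~\ref{cl:request-means}, as in Claim~\ref{cl:aps-feasible}. In every outcome \eqref{eq:realize-capacity} then holds while each $P_i$ has law $\widetilde D_i$.

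\textbf{Realization and expectation.} We apply Lemma~\ref{lem:realize} to the sampled configurations, which gives every active agent $u_i(B_i)\ge1$, i.e.\ $v_i(B_i)\ge t_i$, in every outcome. By that lemma and the marginal bound, the assignment probabilities satisfy $\Pr(g\in B_i)\le (1+O(1/q))\,b_i$ on ordinary and buffer goods, and every good in $R\cup S_H\cup E_H$ is unused. We then allocate the leftover goods randomly, independently of realized values, so that each agent's total assignment probability $\mu_{ig}$ for each rank $g$ is at least $b_i/W$. This is feasible since $\sum_i b_i/W=1$ and the existing probabilities are below $b_i/W$: the slack $1/W-(1+O(1/q))\ge \delta-O(1/q)>0$. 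Each good $g$ is offered to agent $i$ with probability $b_i/W-\Pr(g\in B_i)$ via a lottery conditioned on $g$ being free. Possible correlations are handled by assigning each free good with conditional probabilities proportional to the remaining deficits; the conditional probabilities are averaged because the total deficit equals $\Pr(g \text{ free})$. The result gives $\E[v_i(A_i)]\ge (b_i/W)v_i(M)$ for the ordered valuations. Finally, the ordering reduction from Section~\ref{sec:prelim} transfers this guarantee to the original instance, since it preserves assignment probabilities per rank in a value-dominating way.
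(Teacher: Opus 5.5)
\textbf{Genuine gap.} The gap is at the step you yourself flagged as the main obstacle: controlling the tail requests across rows. Suppose you use one aggregated sum $T_h=\sum_{i:h_i=h}\lambda_{iP_i}$ per cutoff class. Then the tail part of row $j$ is $|C_j|\sum_{h<j}T_h$. Lemma~\ref{lem:doerr-distribution-preserving} bounds each $T_h$ only separately by $2\xi$, and these errors may all have the same sign. So row $j$ is controlled only up to $2\xi|C_j|$ times the number of cutoff classes below $j$. Since $h_i$ grows like $q\log(1/b_i)$ and entitlements can be arbitrarily small, that number is not bounded in terms of $q$ and $c$. Your absorption condition $2\xi|C_j|\lesssim|C_j|/q$ therefore does not follow.

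The obvious alternatives fail as well. A separate tail sum for each row puts agent $i$ in $H-h_i$ sums, so her absolute contribution $(H-h_i)\lambda_{iP}$ cannot be kept below a uniform $\xi$. A single global sum $\Lambda=\sum_i\lambda_{iP_i}$ alongside head functions $|P\cap C_j|/|C_j|$ controls only the full sum, but row $j$ needs the partial sum over $\{i:h_i<j\}$. Bounding that partial sum by $\Lambda$ double counts: at row $j=2$ every agent is in her head, so the bound is about $2W$, which exceeds capacity.

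\textbf{The missing idea.} Subtract the tail proportion inside the head functions, so that one global sum serves every row. Set $f_{1,i}(P)=\lambda_{iP}$ for every agent, and $f_{j,i}(P)=|P\cap C_j|/|C_j|-\lambda_{iP}$ for $2\le j\le h_i$ (zero for $j>h_i$). Then $f_{1,i}(P)+f_{j,i}(P)=d_{j,iP}/|C_j|$ for every $2\le j\le H$. Each normalized row total is thus the sum of two controlled sums and deviates from its mean by at most $4\xi$.

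The price is that the per-agent absolute contribution becomes $K_i(P)=\sum_{j\le h_i}|P\cap C_j|/|C_j|+h_i\lambda_{iP}$. This, not a head-only quantity, is what you must condition on. Since $\E_{P\sim D_i}[K_i(P)]\le2h_ib_i$, Claim~\ref{cl:cutoff} and Markov's inequality show that conditioning on $K_i(P)\le1/(2q)$ removes mass below $1/q$ when $q=8/\delta$ and $c=1/(64q^3\log q)$. This condition also forces $P\cap C_1=\varnothing$. Taking $\xi=1/(2q)$ then gives a row error of $2/q$, which fits within the supply slack.

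Your realization step, the marginal bound $\Pr(g\in B_i)\le b_i/W$, and the completion with probabilities $(b_i/W-\mu_{ig})/(1-\sum_k\mu_{kg})$ then match the paper, up to constant bookkeeping.
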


\begin{proof}
Write $t_i=\APS_i(b_i)$ and $N_+=\{i\in N:t_i>0\}$.
If $N_+=\varnothing$, assign every good to agent $i$ with probability
$b_i/W$; all APS targets are zero and the expected values are as
required. Otherwise, work with the ordered instance from
Section~\ref{sec:prelim} and use the construction of
Section~\ref{sec:ingredients} with
\[
 q=\frac8\delta,\qquad c=\kappa_\delta=\frac1{64q^3\log q}.
\]
Thus $q\ge16$ is an integer, $W\le1-8/q$, and
$b_i\le c<1/q^2$. We use the certificates, capped valuations,
buckets, supplies, cutoffs, tail proportions, and requests defined
in that section.

\paragraph{The available supplies.}
Claim~\ref{cl:geometry} implies
\begin{equation}\label{eq:bw-supply}
 |S_{j-1}|\ge\frac{1-3/q}{1+2/q}|C_j|
 \ge(1-5/q)|C_j|
 \qquad(2\le j\le H).
\end{equation}
The first inequality combines the ordinary-supply and bucket-growth
bounds. The second uses
$(1-5/q)(1+2/q)=1-3/q-10/q^2\le1-3/q$.

\paragraph{Conditioning the certificates.}
For every $i\in N_+$ and $P\subseteq M$, define
\[
 K_i(P)=\sum_{j=1}^{h_i}\frac{|P\cap C_j|}{|C_j|}
        +h_i\lambda_{iP}.
\]
Since $|C_1|=q$, the condition $K_i(P)\le1/(2q)$ implies
$P\cap C_1=\varnothing$: a single good from $C_1$ would contribute
$1/q$ to $K_i(P)$. Moreover,
\[
 \E_{P\sim D_i}[K_i(P)]\le h_i b_i+h_i b_i=2h_i b_i.
\]
Indeed, the certificate's inclusion bounds give expectation at most
$b_i$ for each bucket term, and Claim~\ref{cl:request-means} gives
$\E_{P\sim D_i}[\lambda_{iP}]\le b_i$.

Let $\alpha_i=\Pr_{P\sim D_i}[K_i(P)\le1/(2q)]$. Then
\begin{align*}
 1-\alpha_i
 &\le2q\E_{P\sim D_i}[K_i(P)]\le4qh_i b_i\\
 &\le8qc+8q^2c\log(1/c)
 \le49q^2c\log q=\frac{49}{64q}<\frac1q.
\end{align*}
The first inequality is Markov's inequality, and the second uses the
preceding expectation bound. The third uses Claim~\ref{cl:cutoff},
$b_i\le c$, and the monotonicity of $b\log(1/b)$ on $(0,c]$.
The fourth uses $\log(1/c)\le6\log q$ and $q\log q\ge8$, both
valid for $q\ge16$; the equality substitutes the definition of $c$.

Let $\widetilde D_i$ be $D_i$ conditioned on $K_i(P)\le1/(2q)$,
and let $\mathcal P_i$ be its support. Every $P\in\mathcal P_i$
avoids $C_1$ and satisfies $u_i(P)\ge1$ by Claim~\ref{cl:capping},
so it is a configuration. Each good has inclusion probability at most
$b_i/\alpha_i\le b_i/(1-1/q)$ under $\widetilde D_i$.
Claim~\ref{cl:request-means} therefore yields
\begin{equation}\label{eq:bw-request-means}
 \E_{P\sim\widetilde D_i}[d_{j,iP}]
 \le\frac{b_i|C_j|}{1-1/q}
 \qquad(i\in N_+,\ 2\le j\le H).
\end{equation}

\paragraph{Selecting configurations jointly.}
For each $i\in N_+$, define $H$ real-valued functions on
$\mathcal P_i$ by
\[
 f_{1,i}(P)=\lambda_{iP},\qquad
 f_{j,i}(P)=
 \begin{cases}
 |P\cap C_j|/|C_j|-\lambda_{iP},&j\le h_i,\\
 0,&j>h_i
 \end{cases}
 \quad(2\le j\le H).
\]
They satisfy
\begin{equation}\label{eq:bw-request-decomposition}
 f_{1,i}(P)+f_{j,i}(P)=\frac{d_{j,iP}}{|C_j|}
 \qquad(2\le j\le H).
\end{equation}
For $j\le h_i$, the two tail terms cancel. For $j>h_i$, the second
term is zero and $d_{j,iP}=|C_j|\lambda_{iP}$. Moreover,
\[
 \sum_{r=1}^H|f_{r,i}(P)|
 \le\sum_{j=2}^{h_i}\frac{|P\cap C_j|}{|C_j|}
       +h_i\lambda_{iP}
 =K_i(P)\le\frac1{2q}.
\]
The first inequality uses the triangle inequality and
$\lambda_{iP}\ge0$. The equality uses $P\cap C_1=\varnothing$,
and the last inequality holds throughout $\mathcal P_i$.

Apply Lemma~\ref{lem:doerr-distribution-preserving} with $I=N_+$,
$Q_i=\widetilde D_i$, the $H$ functions above, and $\xi=1/(2q)$.
It gives jointly sampled configurations $P_i\sim\widetilde D_i$
such that, in every outcome,
\[
 \left|\sum_{i\in N_+}f_{r,i}(P_i)
       -\sum_{i\in N_+}
          \E_{P\sim\widetilde D_i}[f_{r,i}(P)]\right|
 \le\frac1q
 \qquad(1\le r\le H).
\]
Adding the upper bounds for $r=1$ and $r=j$, and using
\eqref{eq:bw-request-decomposition}, gives
\[
 \frac1{|C_j|}\sum_{i\in N_+}d_{j,iP_i}
 \le\frac1{|C_j|}\sum_{i\in N_+}
          \E_{P\sim\widetilde D_i}[d_{j,iP}]+\frac2q
 \le\frac{W}{1-1/q}+\frac2q
 \le1-\frac5q.
\]
The second inequality uses \eqref{eq:bw-request-means} and
$\sum_{i\in N_+}b_i\le W$. The third uses $W\le1-8/q$ and
$(1-8/q)/(1-1/q)+2/q=1-7/(q-1)+2/q\le1-5/q$.
Combining this with \eqref{eq:bw-supply} gives
\[
 \sum_{i\in N_+}d_{j,iP_i}\le|S_{j-1}|
 \qquad(2\le j\le H).
\]

\paragraph{Realizing the configurations.}
Apply Lemma~\ref{lem:realize} to the selected configurations in each
outcome. Let $B_i$ be its partial bundles for $i\in N_+$, and set
$B_i=\varnothing$ for $i\notin N_+$. Every active agent satisfies
$u_i(B_i)\ge u_i(P_i)\ge1$ in every outcome. Define
\[
 \mu_{ig}=\Pr(g\in B_i)\qquad(i\in N,\ g\in M),
\]
including both configuration selection and realization in the
probability. We show that $\mu_{ig}\le b_i/W$ for every $i,g$.

Fix $i\in N_+$. For an ordinary good $g\in S_{j-1}$ with
$2\le j\le H$,
\[
 \mu_{ig}
 =\frac{\E_{P\sim\widetilde D_i}[d_{j,iP}]}{|S_{j-1}|}
 \le\frac{b_i}{(1-1/q)(1-5/q)}
 \le\frac{b_i}{1-6/q}\le\frac{b_i}{W}.
\]
The equality uses Lemma~\ref{lem:realize} and the preservation of
$\widetilde D_i$ during configuration selection. The first inequality
uses \eqref{eq:bw-request-means} and \eqref{eq:bw-supply}.
The second uses $(1-1/q)(1-5/q)\ge1-6/q$, and the third uses
$W\le1-8/q$.

For a buffer good $g\in E_{h_i}$ with $h_i<H$,
\[
 \mu_{ig}=\frac1{|E_{h_i}|}
 \le\frac{b_i}{1-b_i}\le\frac{b_i}{W}.
\]
The equality follows from Lemma~\ref{lem:realize}. The cutoff
definition gives $|C_{h_i}|\ge q/b_i$, so
$|E_{h_i}|=\lfloor|C_{h_i}|/q\rfloor\ge1/b_i-1$, proving
the first inequality. The second uses $b_i\le c<8/q\le1-W$.
All remaining assignment probabilities, including those of agents
outside $N_+$, are zero. Thus $\mu_{ig}\le b_i/W$ for all $i,g$.

\paragraph{Completing the allocation.}
Whenever a good $g$ is unused, give it to agent $i$ with probability
\[
 \frac{b_i/W-\mu_{ig}}{1-\sum_{k\in N}\mu_{kg}}.
\]
The numerator is nonnegative, and these probabilities sum to one.
The denominator is the probability that $g$ is unused. If it is zero,
$g$ is never unused in an outcome of positive probability, and its
existing assignment probabilities already equal $b_i/W$ for every
agent $i$. Otherwise, the completed assignment probability is
\[
 \mu_{ig}
 +\left(1-\sum_{k\in N}\mu_{kg}\right)
    \frac{b_i/W-\mu_{ig}}{1-\sum_{k\in N}\mu_{kg}}
 =\frac{b_i}{W}.
\]
Let $A$ be the completed allocation. It adds goods to each $B_i$, so, in the ordered instance,
\begin{equation}\label{eq:bw-slack-guarantees}
 \E[v_i(A_i)]=\frac{b_i}{W}v_i(M),
 \qquad v_i(A_i)\ge t_i\quad\text{in every outcome}.
\end{equation}
The equality uses additivity. For $i\in N_+$, the inequality follows
from $v_i(g)\ge t_i u_i(g)$ and $u_i(B_i)\ge1$; it is automatic
when $t_i=0$.

Apply the ordered-instance
reduction from Section~\ref{sec:prelim} to each outcome and assign any remaining goods arbitrarily. The reduction preserves $v_i(M)$
and each agent's APS, and never decreases her value in any outcome.
Thus \eqref{eq:bw-slack-guarantees} proves the proposition for the
original instance.
\end{proof}

\subsection{Proofs of the BOBW Results}
Fix $0<\varepsilon\le1/2$ and assume for simplicity that $1/\varepsilon$ is an integer. Let 
\[
 c'_\varepsilon=\frac{\varepsilon^3}{262144\log(16/\varepsilon)}
 =\kappa_{\varepsilon/2}.
\]
For Theorems~\ref{thm:bobw} and \ref{thm:bobw2}, consider an instance with nonnegative additive
valuations satisfying $W=\sum_{i\in N}b_i=1$ and
$0<b_i\le c'_\varepsilon$ for every $i\in N$.
We first prove Theorem~\ref{thm:bobw}.

\begin{proof}[Proof of Theorem~\ref{thm:bobw}]
Let $L=1+1/\varepsilon$. Partition $N$ into groups $G_1,\ldots,G_L$
by repeatedly placing an agent in a group of minimum current total
entitlement. The difference between any two group totals is at most
$c'_\varepsilon$: adding an agent to a smallest group increases its
total by at most $c'_\varepsilon$ and preserves this bound. Write
\[
 W_\ell=\sum_{i\in N\setminus G_\ell}b_i
 \qquad(1\le\ell\le L).
\]
Then
\[
 1-W_\ell\ge\frac1L-c'_\varepsilon
 \ge\frac{\varepsilon}{1+\varepsilon}-\frac\varepsilon6
 \ge\frac\varepsilon2.
\]
The first inequality uses the bound on the group totals and their
average $1/L$. The second uses $c'_\varepsilon\le\varepsilon/6$, and
the third uses $\varepsilon\le1/2$. All groups are nonempty, so
$0<W_\ell\le1-\varepsilon/2$.

For each $\ell$, apply Proposition~\ref{prop:bobw-slack} with
$\delta=\varepsilon/2$ to $N\setminus G_\ell$, keeping all goods
and the original entitlements. Its hypotheses hold because
$b_i\le c'_\varepsilon=\kappa_{\varepsilon/2}$. Denote its allocation
by $A^\ell=(A_i^\ell)_{i\in N}$, setting
$A_i^\ell=\varnothing$ for $i\in G_\ell$.

Choose a random group index $Z\in\{1,\ldots,L\}$ with
\[
 p_\ell=\Pr(Z=\ell)=\frac{W_\ell}{L-1}.
\]
These probabilities sum to one because $\sum_{\ell=1}^L W_\ell=L-1$,
and each satisfies $p_\ell\le1/(L-1)=\varepsilon$.
Conditional on $Z=\ell$, sample $A^\ell$ and set $A=A^\ell$.

Fix $i\in G_k$. She receives her APS whenever $Z\ne k$, so her
failure probability is at most $p_k\le\varepsilon$. Moreover,
\[
 \E[v_i(A_i)]
 =\sum_{\substack{\ell=1\\\ell\ne k}}^L
        p_\ell\E[v_i(A_i^\ell)]
 \ge\sum_{\substack{\ell=1\\\ell\ne k}}^L
        \frac{W_\ell}{L-1}\frac{b_i}{W_\ell}v_i(M)
 =b_i v_i(M).
\]
The first equality conditions on $Z$ and uses $A_i^k=\varnothing$.
The inequality uses Proposition~\ref{prop:bobw-slack} and the
definition of $p_\ell$; the final equality counts the $L-1$
indices different from $k$.
\end{proof}

The probability guarantee is individual: every agent succeeds with
probability at least $1-\varepsilon$. It does not assert that all agents
succeed simultaneously with that probability.

\begin{proof}[Proof of Theorem~\ref{thm:bobw2}]
Set $\widehat b_i=(1-\varepsilon)b_i$ and
$\widehat W=\sum_{i\in N}\widehat b_i=1-\varepsilon$.
Apply Proposition~\ref{prop:bobw-slack} with $\delta=\varepsilon/2$
to these entitlements, keeping the goods and valuations unchanged.
Its hypotheses hold because $\widehat W\le1-\varepsilon/2$ and
$\widehat b_i\le c'_\varepsilon=\kappa_{\varepsilon/2}$.
The resulting allocation satisfies
$v_i(A_i)\ge\APS_i((1-\varepsilon)b_i)$ in every outcome and
\[
 \E[v_i(A_i)]\ge\frac{\widehat b_i}{\widehat W}v_i(M)
 =\frac{(1-\varepsilon)b_i}{1-\varepsilon}v_i(M)
 =b_i v_i(M).
\]
The inequality is the proposition's expected-value guarantee, and
the first equality substitutes the modified entitlements and their sum.
\end{proof}

\section{Linear Hardness}
\label{sec:hardness}

\subsection{The Thirteen-Goods Instance}

Let $G=\{1,\ldots,13\}$ be a set of goods, and let $v$ be the
nonnegative additive valuation defined by
\[
 (v(1),\ldots,v(13))
 =(137,135,119,79,77,76,54,50,49,10,9,8,6).
\]
This instance appears in Table~2 of \citep{martinovic2016proper}. We give an explicit certificate
and a direct nonexistence proof.

Let $\mathcal Q$ be the family of bundles listed in the following
table. For every bundle $P\in\mathcal Q$, define its weight $\tau_P$
to be the number in the last column.
\begin{center}
\begin{tabular}{@{}lrr@{}}
\toprule
Bundle $P$ & Value $v(P)$ & Weight $\tau_P$\\
\midrule
$\{4,5\}$ & $156$ & $1/2$\\
$\{4,6\}$ & $155$ & $1/2$\\
$\{5,6\}$ & $153$ & $1/2$\\
$\{2,7\}$ & $189$ & $1/5$\\
$\{3,7\}$ & $173$ & $1/5$\\
$\{3,8\}$ & $169$ & $2/5$\\
$\{3,9\}$ & $168$ & $2/5$\\
$\{7,8,9\}$ & $153$ & $3/5$\\
$\{2,10,11\}$ & $154$ & $1/5$\\
$\{2,10,12\}$ & $153$ & $1/5$\\
$\{1,11,12\}$ & $154$ & $2/5$\\
$\{1,10,13\}$ & $153$ & $3/5$\\
$\{2,11,12,13\}$ & $158$ & $2/5$\\
\bottomrule
\end{tabular}
\end{center}
For every good $g\in G$, define its \emph{load} $\ell_g$ as the total
weight of the bundles containing it:
\[
 \ell_g=\sum_{\substack{P\in\mathcal Q\\g\in P}}\tau_P.
\]

\begin{observation}[Fractional certificate]\label{obs:hardness-certificate}
Every bundle $P\in\mathcal Q$ has value at least $153$. The weights
have total $51/10$, and every good has load one:
\[
 \sum_{P\in\mathcal Q}\tau_P=\frac{51}{10},
 \qquad \ell_g=1\quad(g\in G).
\]
\end{observation}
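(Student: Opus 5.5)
The observation is a finite check, so I will verify its three claims directly from the table, in the order stated: the bundle values, then the total weight, then the thirteen loads. No earlier result is needed. The only thing to guard against is an arithmetic slip, so I will organize the load computation good by good. That is the step with the most bookkeeping, though there is no conceptual obstacle.

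\emph{Values.} I will recompute each $v(P)$ from $(v(1),\ldots,v(13))$. For example:
\begin{itemize}
\item $v(\{4,5\})=79+77=156$ and $v(\{5,6\})=77+76=153$;
\item $v(\{7,8,9\})=54+50+49=153$;
\item $v(\{2,10,12\})=135+10+8=153$ and $v(\{1,10,13\})=137+10+6=153$;
\item $v(\{2,11,12,13\})=135+9+8+6=158$.
\end{itemize}
Checking the remaining rows the same way confirms every listed value. The minimum over the table is $153$, attained by $\{5,6\}$, $\{7,8,9\}$, $\{2,10,12\}$ and $\{1,10,13\}$.

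\emph{Total weight.} I will split the weights into the three halves and the fifths:
\[
 3\cdot\tfrac12+\tfrac{1+1+2+2+3+1+1+2+3+2}{5}
 =\tfrac{3}{2}+\tfrac{18}{5}=\tfrac{51}{10}.
\]

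\emph{Loads.} For each good $g$ I will list the bundles containing it and add their weights:
\begin{itemize}
\item goods $4,5,6$ each lie in exactly two of the half-weight pairs, giving $\tfrac12+\tfrac12$;
\item good $1$: $\tfrac25+\tfrac35$ from $\{1,11,12\},\{1,10,13\}$;
\item good $2$: $\tfrac15+\tfrac15+\tfrac15+\tfrac25$ from $\{2,7\},\{2,10,11\},\{2,10,12\},\{2,11,12,13\}$;
\item good $3$: $\tfrac15+\tfrac25+\tfrac25$;
\item good $7$: $\tfrac15+\tfrac15+\tfrac35$;
\item goods $8$ and $9$: each $\tfrac25+\tfrac35$;
\item good $10$: $\tfrac15+\tfrac15+\tfrac35$;
\item goods $11$ and $12$: each $\tfrac15+\tfrac25+\tfrac25$;
\item good $13$: $\tfrac35+\tfrac25$.
\end{itemize}
Each sum equals one, which completes the verification.
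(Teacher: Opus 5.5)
Your verification is correct: every bundle value in the table (including the rows you only asserted, such as $\{4,6\}$, $\{2,7\}$, $\{3,7\}$, $\{3,8\}$, $\{3,9\}$, $\{2,10,11\}$, $\{1,11,12\}$), the total $\tfrac32+\tfrac{18}{5}=\tfrac{51}{10}$, and all thirteen loads check out. This is the same approach as the paper, which states the observation without further argument as a direct finite check of the table.
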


\begin{claim}[No five target bundles]\label{cl:hardness-core}
There are no five pairwise disjoint bundles of value at least $153$.
\end{claim}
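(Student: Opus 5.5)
The approach is a \emph{waste-counting} argument combined with a short case analysis. The total value is
\[
 v(G)=137+135+119+79+77+76+54+50+49+10+9+8+6=809,
\]
while five disjoint bundles of value at least $153$ need total value at least $765$. Suppose such bundles $B_1,\ldots,B_5$ exist. Define the \emph{waste} as the value of unassigned goods plus $\sum_k\bigl(v(B_k)-153\bigr)$. Then the waste is exactly $809-765=44$. Every structural deduction below is made by exhibiting more than $44$ units of waste.

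Partition the goods into the \emph{large} goods $\{1,2,3\}$ (values $137,135,119$), the \emph{middle} goods $\{4,\ldots,9\}$ (values $79,77,76,54,50,49$), and the \emph{small} goods $\{10,\ldots,13\}$ (values $10,9,8,6$, total $33$).

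\textbf{Large goods lie in distinct bundles.} A bundle containing two large goods has value at least $119+135=254$, so its excess alone is at least $101>44$. Hence no bundle contains two large goods. Every large good must be assigned, since leaving one unassigned wastes at least $119$. So exactly three bundles, say $B_1\ni1$, $B_2\ni2$, $B_3\ni3$, each contain one large good. The remaining two bundles $B_4,B_5$ consist only of middle and small goods.

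\textbf{The pure bundles take at least four middle goods.} A bundle with at most one middle good has value at most $79+33=112<153$. So $B_4$ and $B_5$ each contain at least two middle goods, four in total. Consequently at most two middle goods remain for $B_1,B_2,B_3$.

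\textbf{$B_3$ needs a middle good.} Bundle $B_3$ must collect at least $153-119=34>33$ beyond good $3$, so it cannot be completed by small goods alone. It therefore contains a middle good, which costs waste at least $119+49-153=15$. The remaining requirements are that $B_1$ needs $16$ more and $B_2$ needs $18$ more.

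\textbf{Case analysis.} I will split on how many middle goods go to $B_1\cup B_2$: zero or one (at most two middle goods are available outside $B_4,B_5$, and $B_3$ takes one).

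\emph{If none}, then $B_1$ and $B_2$ are completed by small goods and need $16+18=34>33$ in total. This is impossible.

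\emph{If one}, then $B_4,B_5$ use exactly four middle goods, two each. Two middle goods reach $153$ only as $79+77=156$, or $79+76$ or $77+76$ topped up with small goods. So the pure bundles essentially use up $\{4,5,6\}$ together with one of $\{7,8,9\}$, possibly plus small goods. The middle goods left for $B_1$–$B_3$ are then two of $\{54,50,49\}$. Putting one of these with good $1$ or $2$ wastes at least $137+49-153=33$ or $135+49-153=31$. Adding the $\ge15$ already wasted in $B_3$ exceeds $44$.

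I expect the main obstacle to be this final enumeration. The tight alternatives, where the pure bundles contain three middle goods such as $54+50+49=153$, must be reconciled with the count above. The ``two middle goods each'' step has to be refined: a pure bundle may instead use three middle goods, so I should redo that count by splitting on whether one pure bundle is $\{7,8,9\}$. I would handle this by listing, for each pure bundle, all minimal middle-good combinations reaching $153$ together with their excess. I would then check the few surviving combinations against the waste budget, tracking the excess contributed by each bundle explicitly.
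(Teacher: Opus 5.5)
Your argument is correct and is essentially the paper's proof: total excess $44$, goods $1,2,3$ in distinct bundles, $B_3$ forced to take a middle good (excess $\ge 15$), and then $B_1\cup B_2$ either contains no middle good (impossible, since $16+18=34>33$) or contains one (excess $\ge 135+49-153=31$), giving $46>44$. The worry in your last paragraph is unfounded. If a pure bundle has three middle goods, then $B_4\cup B_5$ holds at least five, leaving at most one for $B_1\cup B_2\cup B_3$; $B_3$ must take it, so that configuration is your already-closed ``none'' case. In fact the pure-bundle count and the enumeration of $B_4,B_5$ can be dropped entirely, since every middle good is worth at least $49$; the paper never examines $B_4,B_5$. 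This is just as well, because your claim about which pairs of middle goods can reach $153$ is inaccurate, e.g.\ $v(\{4,7,10,11,12\})=160$.
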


\begin{proof}
Suppose such bundles $B_1,\ldots,B_5\subseteq G$ exist, with
$v(B_j)\ge153$ for every index $j\in\{1,\ldots,5\}$.
Assign any unused goods to them, so that they partition $G$.
For every index $j\in\{1,\ldots,5\}$, define
the nonnegative excess $e_j=v(B_j)-153$. Since $v(G)=809$, their
total excess is
\[
 \sum_{j=1}^5e_j=809-5\cdot153=44.
\]
No bundle can contain two of goods $1,2,3$: even the two smallest of
these goods would give excess at least
\[
 135+119-153=101>44.
\]
Relabel the bundles so that $1\in B_1$, $2\in B_2$, and $3\in B_3$.

Define $L=\{10,11,12,13\}$ to be the set of the four smallest goods.
Their total value is $v(L)=10+9+8+6=33$. Bundle $B_3$ needs at least
$153-119=34$ additional value, so it must contain a good from
$\{4,\ldots,9\}$. Every such good is worth at least $49$, giving
\[
 e_3\ge119+49-153=15.
\]
Together, the disjoint bundles $B_1$ and $B_2$ need additional value
at least
\[
 (153-137)+(153-135)=34>v(L).
\]
Thus one of them, denoted $B_j$ for an index $j\in\{1,2\}$, also
contains a good from $\{4,\ldots,9\}$. Its excess satisfies
\[
 e_j\ge135+49-153=31.
\]
The two distinct bundles $B_3$ and $B_j$ therefore have combined
excess at least $15+31=46$, contradicting the total excess of $44$.
\end{proof}

\subsection{Adding Agents and Singleton Goods}

\begin{proposition}[A family with arbitrarily small entitlements]
\label{prop:hardness-padding}
For every integer $n\ge5$, define
\[
 \beta_n=\frac{10}{10n+1}.
\]
There is an instance with agent set $N=\{1,\ldots,n\}$, identical
additive valuations, and entitlements $b_i=\beta_n$ for every agent
$i\in N$, such that
\[
 \sum_{i\in N}b_i=1-\frac1{10n+1},
 \qquad \APS_i(\beta_n)\ge153\quad(i\in N),
\]
but no allocation gives every agent value at least $153$.
\end{proposition}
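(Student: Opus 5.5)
Our plan is to pad the thirteen-goods instance with $n-5$ singleton goods. Let the goods be $G\cup\{g_6,\ldots,g_n\}$, where each new good $g_k$ has value $153$ for every agent; all agents have this identical additive valuation $v$. The entitlement sum is immediate: $n\beta_n=10n/(10n+1)=1-1/(10n+1)$. Two properties must then be checked: every agent's APS at budget $\beta_n$ is at least $153$, and no allocation gives all $n$ agents value at least $153$.

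\textbf{APS lower bound.} We use Lemma~\ref{lem:certificate} and build a distribution $D$ on bundles of value at least $153$ whose item marginals are at most $\beta_n$. The certificate from Observation~\ref{obs:hardness-certificate} has total weight $51/10$ with every good of $G$ at load one. Add the singleton bundles $\{g_k\}$ with weight $1$ each. The total weight becomes $51/10+(n-5)=(10n+1)/10$, and every good still has load exactly one. Normalizing by this total gives a distribution with every marginal equal to $10/(10n+1)=\beta_n$. Every bundle in its support has value at least $153$, so $\APS_i(\beta_n)\ge153$. This step is routine once the weights are checked against the table.

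\textbf{Nonexistence.} Suppose there is an allocation giving all $n$ agents value at least $153$, so there are $n$ disjoint bundles, each of value at least $153$. Let $k$ be the number of these bundles containing a padding good. Then $k\le n-5$. The remaining $n-k\ge5$ bundles lie entirely in $G$, are pairwise disjoint, and each has value at least $153$. This contradicts Claim~\ref{cl:hardness-core}.

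The main (mild) obstacle is confirming that the loads are exactly one, so that normalization gives marginal exactly $\beta_n$ and not more. This follows directly from Observation~\ref{obs:hardness-certificate}. The pigeonhole step needs no extra care, because padding goods can only sit in at most $n-5$ bundles.
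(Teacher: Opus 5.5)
Your proposal is correct and matches the paper's proof step for step. You use the same padding with $n-5$ goods of value $153$. You obtain the certificate the same way, by adding unit-weight singletons to the thirteen-goods family and scaling by $\beta_n$ (total weight $n+1/10$, every load one). You close with the same pigeonhole argument: at least five bundles avoid the padding goods, which contradicts Claim~\ref{cl:hardness-core}.
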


\begin{proof}
Let $G_{\mathrm{new}}$ be a set of $n-5$ additional goods, disjoint
from $G$, and set $M=G\cup G_{\mathrm{new}}$. Extend $v$ by assigning
value $153$ to every good in $G_{\mathrm{new}}$. Every agent has
valuation $v_i=v$ and entitlement $b_i=\beta_n$.

Define the extended bundle family
\[
 \mathcal Q_n=\mathcal Q\cup\{\{g\}:g\in G_{\mathrm{new}}\}.
\]
Retain the weights $\tau_P$ for the bundles $P\in\mathcal Q$ and
define $\tau_{\{g\}}=1$ for every good $g\in G_{\mathrm{new}}$.
Observation~\ref{obs:hardness-certificate} gives the total weight and the
load of each good:
\[
 \sum_{P\in\mathcal Q_n}\tau_P
 =\frac{51}{10}+n-5=n+\frac1{10},
 \qquad
 \sum_{\substack{P\in\mathcal Q_n\\g\in P}}\tau_P=1
 \quad(g\in M).
\]
Define a distribution $D_n$ on subsets of $M$ by
\[
 D_n(P)=
 \begin{cases}
  \beta_n\tau_P,&P\in\mathcal Q_n,\\
  0,&P\notin\mathcal Q_n.
 \end{cases}
\]
Its probabilities sum to $\beta_n(n+1/10)=1$. Every bundle in its
support has value at least $153$, and each good $g\in M$ has
inclusion probability
\[
 \sum_{\substack{P\subseteq M\\g\in P}}D_n(P)
 =\beta_n\sum_{\substack{P\in\mathcal Q_n\\g\in P}}\tau_P
 =\beta_n.
\]
Hence $D_n$ is an APS certificate for target $153$ at entitlement
$\beta_n$. Lemma~\ref{lem:certificate} implies
$\APS_i(\beta_n)\ge153$ for every agent $i\in N$.

For any allocation $(A_i)_{i\in N}$, define
\[
 J=\{i\in N:A_i\cap G_{\mathrm{new}}\ne\varnothing\}
\]
to be the set of agents receiving a new good. Disjointness gives
$|J|\le|G_{\mathrm{new}}|=n-5$, so at least five agents receive only
goods from $G$. If all agents received value at least $153$, these
five bundles would contradict Claim~\ref{cl:hardness-core}.
Finally, the entitlement sum is
$n\beta_n=10n/(10n+1)=1-1/(10n+1)$, as claimed.
\end{proof}

\begin{proof}[Proof of Theorem~\ref{thm:hardness-cap}]
Fix $0<\varepsilon<1/51$, and define
\[
 n_\varepsilon=\left\lceil\frac{1/\varepsilon-1}{10}\right\rceil-1.
\]
This is the largest integer strictly below $(1/\varepsilon-1)/10$;
in particular, $n_\varepsilon\ge5$. Its defining inequalities give
\[
 \frac1\varepsilon-10\le10n_\varepsilon+1<\frac1\varepsilon.
\]
Apply Proposition~\ref{prop:hardness-padding} with $n=n_\varepsilon$.
The resulting instance has no APS allocation, and its entitlement
sum is strictly below $1-\varepsilon$:
\[
 \sum_{i\in N}b_i=1-\frac1{10n_\varepsilon+1}<1-\varepsilon.
\]
Since all its entitlements equal $\beta_{n_\varepsilon}$, no cap
$c\ge\beta_{n_\varepsilon}$ has the universal property defining
$c^\star(\varepsilon)$. Moreover, $1/\varepsilon-10>0$, and $1-10\varepsilon\geq 10/13$ so
\[
 c^\star(\varepsilon)\le\beta_{n_\varepsilon}
 =\frac{10}{10n_\varepsilon+1}
 \le\frac{10\varepsilon}{1-10\varepsilon} \leq 13\varepsilon,
\]
which concludes the proof.
\end{proof}

\section{Discussion}
\subsection{Polynomial-Time Implementation}
\label{sec:implemetaion}
All our positive results admit polynomial-time implementations for the same order of bound on the maximum entitlement.  
Consider an instance with $W\le1-\delta$ and set
$b_i'=(1+\delta/2)b_i$. Knapsack price rounding and multiplicative
weights compute values $x_i$ satisfying
\[
 \APS_i(b_i)\le x_i\le\APS_i(b_i'),
\]
together with polynomial-support certificates for $x_i$ at entitlement
$b_i'$.\footnote{At any price vector, rounding prices down allows
knapsack dynamic programming to find a bundle of value at least
$\APS_i(b_i)$ and price at most $(1+\delta/4)b_i$.
Multiplicative weights averages polynomially many such bundles to obtain
item marginals at most $(1+\delta/2)b_i$; take $x_i$ to be their minimum
value. Entitlements $b_i<1/m$ have APS zero and need no computation.
For rational inputs, the running time is polynomial in the input size
and $1/\delta$.}
Since
\begin{equation}
\label{eq:bi}    
 \sum_i b_i'\le1-\delta/2,
 \qquad \frac{b_i'}{\sum_j b_j'}=\frac{b_i}{W},
\end{equation}
we run the existing constructions with entitlements $b_i'$, targets
$x_i$, and slack parameter $\delta/2$. The proofs require only the
certificates for their targets, and proportionality is preserved by
Equation~\eqref{eq:bi}. An entitlement cap $C_\delta$ is thereby replaced
by $C_{\delta/2}/(1+\delta/2)$, preserving the respective orders
$\varepsilon^2/\log(1/\varepsilon)$ and $\varepsilon^3/\log(1/\varepsilon)$.
All remaining allocation and rounding steps are polynomial;
exact randomized sampling takes polynomial expected time.

\subsection{MMS Allocations with Resource Augmentation}

\label{sec:resource-augmentation}

\citet{akrami2025fair} study a complementary approach to MMS fairness:
retain every agent's original MMS and allow additional copies of
selected goods. Each copy has the same value as the original good to
every agent, and all MMS benchmarks are computed before duplication.
For additive valuations, they show that $n-2$ distinct copies suffice,
and also obtain a bound of $(m/3)(1+o(1))$
\citep{akrami2025fair}.
They ask whether one extra copy always suffices and, more generally,
how far these bounds can be reduced
\citep{akrami2025fair}.
They further point out that, for additive valuations, a general guarantee
of MMS with fewer than $n/4$ copies would improve the then-best-known
$1$-out-of-$d$ MMS bound, so achieving such a guarantee may be
nontrivial~\citep{akrami2025fair}.
Combining our ordinal MMS guarantee with their reduction
\citep[Lemma~3.10]{akrami2025fair} yields the following improvement.

\begin{corollary}[MMS with few duplicated goods]
\label{cor:mms-duplicates}
For every instance with $n\ge2$ agents and $m$ goods, with nonnegative
additive valuations, adding one copy of each of at most
\[
 O\!\left(m\sqrt{\frac{\log n}{n}}\right)
\]
goods suffices to give every agent $i$ value at least
$\mathrm{MMS}_i(n)$, evaluated on the original goods.
\end{corollary}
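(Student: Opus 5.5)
The plan is to combine Corollary~\ref{cor:mms} with the reduction of \citet[Lemma~3.10]{akrami2025fair}. Their lemma converts an ordinal guarantee into a duplication guarantee: if every instance with $n$ agents admits an $\MMS(d)$ allocation for some $d\ge n$, then one can add copies of at most roughly $(d-n)\cdot\lceil m/d\rceil$ goods (or a bound of that shape, proportional to $m(d-n)/d$) so that every agent receives her original $\MMS_i(n)$. I will first restate that lemma in the form needed. The stated quantity $(d-n)m/d$ is what one expects from the following construction: in an $\MMS(d)$ instance with $d-n$ phantom agents, the phantom bundles are duplicated.

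The first step is to fix $d=n+\Theta(\sqrt{n\log n})$ from Corollary~\ref{cor:mms}. This value comes from Theorem~\ref{thm:aps} with $\varepsilon=(d-n)/d$: we need $1/d\le \varepsilon^2/(2048\log(8/\varepsilon))$, and $\varepsilon\asymp\sqrt{\log n/n}$ satisfies this. For small $n$, where the constants fail, the bound is trivial since copying all $m$ goods is allowed. The required asymptotic statement absorbs this.

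The second step is to invoke the reduction. Consider the instance with $d$ agents formed by adding $d-n$ dummy agents, chosen so that the $\MMS(d)$ allocation for the real agents leaves a structure in which the remaining goods can be charged. Equivalently, apply the lemma directly, which gives at most $O((d-n)\,m/d)$ duplicated goods. I expect the lemma's actual mechanism to work as follows. One takes an agent's MMS$(d)$ partition and observes that each real agent needs her MMS$(n)$ value. Since $\MMS_i(n)$ exceeds $\MMS_i(d)$, the deficit is made up by copying goods from the $d-n$ ``extra'' bundles, and the smallest $d-n$ bundles in terms of cardinality contain at most $(d-n)m/d$ goods.

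Substituting gives $(d-n)m/d = O(\sqrt{n\log n})\,m/n = O(m\sqrt{\log n/n})$. The main obstacle is matching the exact hypothesis and conclusion of \citep[Lemma~3.10]{akrami2025fair}. In particular, it must hold for general $d$ rather than only $d=n+1$, its bound on the number of copies must scale like $m(d-n)/d$, and copies must be valued identically and MMS evaluated on the original goods, as in the statement. If the lemma is stated instead with an absolute count, for instance ``$d-n$ times the maximum bundle size'', I would first preprocess by choosing the partition counted there to minimize cardinality and then apply averaging to get the $m/d$ factor.
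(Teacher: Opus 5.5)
Your proposal follows the paper's route: the paper's argument is just Corollary~\ref{cor:mms} with $d=n+O(\sqrt{n\log n})$ fed into the ordinal-to-duplication reduction of \citet[Lemma~3.10]{akrami2025fair}, and whether that reduction's count is of order $m(d-n)/d$ or $m(d-n)/n$ does not matter because $d=n+o(n)$, so either gives $O(m\sqrt{\log n/n})$. Do not rely on your guessed internal mechanism (copying each agent's smallest extra bundles), since the duplicated set must be a single set that serves all agents at once; the argument does not need it once the lemma is used as a black box, as the paper does.
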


The proportion of goods that must be duplicated therefore tends to
zero as $n$ grows. Together with the $n-2$ bound, this gives a bound of
$\min\{n-2,\,O(m\sqrt{\log n/n})\}$ distinct copies; in particular,
$O(\sqrt{n\log n})=o(n)$ copies suffice when $m=O(n)$.
In this regime, the number of copies is therefore below $n/4$ for
sufficiently large $n$.

\subsection{Beyond Additive Valuations}
Our results extend to valuations of the form
$v_i(S)=\phi_i(a_i(S))$, where $a_i$ is a nonnegative additive
valuation and $\phi_i$ is nondecreasing and concave with
$\phi_i(0)=0$. Indeed, APS and MMS under $v_i$ are obtained by
applying $\phi_i$ to the corresponding shares under $a_i$,
so the pointwise and probability guarantees transfer.
Moreover, concavity gives
\[
 v_i(S)\ge \frac{a_i(S)}{a_i(M)}v_i(M)
 \qquad\text{whenever }a_i(M)>0,
\]
so ex-ante proportionality also transfers; if $a_i(M)=0$,
all values are zero. This class includes budget-additive
valuations, for which $\phi_i(x)=\min\{x,B_i\}$.
For the deterministic APS existence result, monotonicity
alone suffices.

\subsection{Limitations of the Bidding Game Approach}
\label{subsec:method-obstructions}

\paragraph{The bidding game.}
A common approach for proving the existence of fair allocations is through the bidding game~\citep{DBLP:conf/sigecom/KalaiMT15,babaioff2024fair,DBLP:journals/corr/abs-2303-12444,babaioff2025share,DBLP:conf/sigecom/FeigeG25}. In a bidding game, the agents have budgets equal to
their entitlements; the highest bidder buys any affordable nonempty
set, paying her bid per good. 
This approach establishes fair allocations by designing safe strategies that guarantee each agent a target value against arbitrary opponent strategies.
Fix a distinguished agent $*$.
We show that she cannot guarantee her APS against arbitrary opponents,
even with fixed entitlement slack and arbitrarily small entitlements. In other words, 
our APS existence result cannot be established by giving
each agent a strategy that guarantees her APS against arbitrary
strategies of the other agents.
We next present why Theorem~\ref{thm:aps} cannot be obtained by the bidding game approach. Similar arguments also hold for Theorems~\ref{thm:bobw} and \ref{thm:bobw2}.

Let $k\ge100$ be even. All agents value each of $k$ large goods at $10$
and each of $2k$ small goods at $1$. Agent $*$ has entitlement $1/k$;
her $k$ opponents each have entitlement $15/(16k)$. Hence
\[
 W=\frac{15}{16}+\frac1k<\frac{19}{20},
 \qquad \max_i b_i=\frac1k,
 \qquad \APS_*(1/k)=12.
\]
The uniform distribution over $k$ disjoint triples, each containing one
large and two small goods, certifies value $12$; prices proportional
to values give the matching upper bound.

Measure money in units of $1/(32k)$: agent $*$ has budget $32$ and
every opponent has budget $30$. Until $*$ first wins, opponents act
one at a time: the first $k/2$ bid $21$, then the others bid $27$, each
taking one large good. All other opponents bid zero.
If $*$ wins during these rounds, she buys only one good, worth at most
$10$, since her winning bid is at least $21$.
We consider three mutually exclusive and exhaustive cases.

\smallskip\noindent\textbf{Case 1: First win in the $21$ phase.}
Agent $*$ has at most $11$ left. Previously unspent opponents remove
the remaining large goods at bid $12$, one each. They suffice even if
$*$ bought a small good, since earlier each opponent took one large
good. Agent $*$ cannot match this bid. At least half the opponents
retain at least $18$, and all others retain at least $9$.
At bid $6$, half can therefore buy three small goods each and the rest
one each, enough for all $2k$ small goods. Agent $*$ can afford at most
one additional small good, so her total value is at most $11$.

\smallskip\noindent\textbf{Case 2: First win in the $27$ phase.}
Agent $*$ has at most $5$ left. As in Case 1, previously unspent opponents remove
the remaining large goods at bid $6$, one each, which $*$ cannot match.
The first half of the opponents retain $9$ each, and all others retain
at least $3$. At bid $3$, half can buy three small goods each and the
rest one each. Again, agent $*$ can afford at most one additional
small good and receives total value at most $11$.

\smallskip\noindent\textbf{Case 3: No win before all large goods are removed.}
The two halves retain budgets $9$ and $3$, respectively. At bid $3$,
they can buy three and one small goods each, respectively, exhausting
the supply. Agent $*$ has budget $32$, so she can buy at most ten
small goods.

In every case, opponents act one at a time, repeating after losses,
until their prescribed purchases are complete or no goods remain.
All opposing bids are integers in the chosen units; agent $*$ may
use arbitrary real bids, and the argument holds under every tie-breaking rule.

The strategy uses only past events, so the bound also holds against
randomized strategies. 
Thus, individual guarantees against arbitrary opponents cannot yield our
results.

\section*{Acknowledgments}

\paragraph{Use of AI tools.}
We used OpenAI's ChatGPT and Codex to assist with exposition,
literature searches, and the exploration of mathematical arguments.
The authors take full responsibility for the mathematical arguments,
references, and conclusions presented in this paper.

\bibliographystyle{abbrvnat}
\bibliography{bib}
\end{document}